\newtheorem{theorem}{Theorem}
\newtheorem{remark}{Remark}
\def\ben{\begin{equation}}
\def\een{\end{equation}}
\def\benu{\begin{enumerate}}
\def\enu{\end{enumerate}}
\begin{document}
\title{Nested Gausslet Basis Sets}
\author{Steven R. White}
\email[]{Author to whom correspondence should be addressed: srwhite@uci.edu}
\affiliation{Department of Physics and Astronomy, University of California, Irvine, CA 92697-4575 USA}
\date{\today}

\author{Michael J. Lindsey}

\affiliation{Department of Mathematics, University of California, Berkeley, CA 94720 USA}

\begin{abstract}
We introduce nested gausslet (NG) bases, an improvement on previous gausslet bases which can treat systems containing atoms with much larger atomic number.  We also introduce pure Gaussian distorted gausslet bases, which allow the Hamiltonian integrals to be performed analytically, as well as hybrid bases in which the gausslets are combined with standard Gaussian-type bases. All these bases feature the diagonal approximation for the electron-electron interactions, so that the Hamiltonian is completely defined by two $N_b\times N_b$ matrices, where $N_b \approx 10^4$ is small enough to permit fast calculations at the Hartree-Fock level. In constructing these bases we have gained new mathematical insight into the construction of one-dimensional diagonal bases. In particular we have proved an important theorem relating four key basis set properties: completeness, orthogonality, zero-moment conditions, and diagonalization of the coordinate operator matrix. We test our basis sets on small systems with a focus on high accuracy, obtaining, for example, an accuracy of $2\times10^{-5}$ Ha for the total Hartree-Fock energy of the neon atom in the complete basis set limit.
\end{abstract}
\maketitle

\section{Introduction}
Electronic structure calculations for molecules and solids are extremely important in science and technology, taking up significant fractions of the world's supercomputing resources. The vast majority of these calculations start with some sort of discretization of space to make the continuum problem finite-dimensional. Most commonly the wavefunction is represented in terms of a basis set for functions of a single space variable.  A significant complication of conventional basis sets for electronic structure, such as atom-centered Gaussian-type orbital bases (GTOs) or plane waves, is the representation of the two-electron Coulomb repulsion as a four-index tensor. Simply storing the full interaction tensor becomes challenging when the number of basis functions exceeds about 1000. Consequently, large-scale calculations use a variety of methods to reduce the $N_b^4$ computational scaling in memory and time that is required merely to define the discretized Hamiltonian, where $N_b$ is the number of basis functions. For example, in the density-fitting/resolution of the identity approach~\cite{dunlap_robust_2000}, an auxiliary basis is constructed for the span of the pair products of basis functions. This reduces the memory scaling for storing the integrals to roughly $N_b^3$, while the computational scaling for downstream use in correlated calculations is highly method-dependent. A more sophisticated compression technique called tensor hypercontraction (THC)~\cite{HohensteinEtAl2012_I,HohensteinEtAl2012_II,ParrishEtAl2013} reduces the memory scaling to roughly $N_b^2$, albeit with a larger preconstant than density fitting, and can be constructed as an interpolative decomposition of the set of pair products of basis functions~\cite{LuYing2015}. A factorization of the interaction tensor into products of $N_b^3$ components plays a role in other approaches~\cite{aquilante_low-cost_2007}, and of course all approaches based on tensor factorization must consider the additional source of method error due to compression. The variety of methods developed to deal with the poor scaling of the interaction tensor demonstrates the practical importance of this issue.

Grid representations potentially yield an even greater reduction in storage: grids naturally allow a diagonal $N_g^2$ representation of the two-electron interaction, where $N_g$ is the number of grid points, but typically $N_g \gg N_b$, largely canceling the improvement. Distorted grids attempt to reduce $N_g$, but there are significant limitations to the amount of distortion, as we shall discuss below~\cite{gygi_real-space_1995}.  

Certain special basis sets can combine the advantages of basis sets and grids. In  particular, a basis set with the {\it diagonal property} permits $N_b^2$ representation of the electron-electron interactions. Moreover, the representation of the electron repulsion integrals is direct, i.e., not presented in a factorized form such as that of density fitting or THC, and the interaction term of the second-quantized Hamiltonian in a diagonal basis only involves density-density contributions. These special properties can improve the scaling of downstream correlated calculations, though scaling details depend on the method. Meanwhile, diagonal basis set approaches enjoy an important advantage over pure finite-difference approaches on grids: as they are basis set methods, it is easy to include extra basis functions, notably atom-centered GTOs, to represent the core electrons and sharp nuclear cusp. 
The oldest basis sets with the diagonal property are grids of sinc functions~\cite{jones_efficient_2016}, but the nonlocality of these functions is a serious limitation. More recently we have developed gausslet bases with the diagonal property~\cite{white_hybrid_2017,white_multisliced_2019,qiu_hybrid_2021}. Gausslets resemble sincs in that they are orthonormal and smooth with a prominent central peak, but they are much more localized in real space.  The diagonal property for gausslets has a clear origin: a gausslet acts like a delta function when integrated against a smooth function, and in this sense it acts like a grid point. Since gausslets are localized, they can be transformed to match their centers to a distorted grid that concentrates functions near nuclei and reduces the $N_b$ required to obtain an accurate approximation.

However, existing methods for distorting grids have significant weaknesses, such as a very limited range of distortion, or the side effect of producing closely spaced functions far from any nuclei. These limitations are most severe for all-electron calculations with larger-$Z$ atoms but already manifest themselves in the first row of the periodic table. By contrast, for systems of hydrogen chains, hybrid gausslet/GTO bases combined with DMRG have provided some of the highest accuracy results available~\cite{white_multisliced_2019}.

Here we introduce a Nested Gausslet (NG) approach which allows arbitrarily large ranges of distortions without artifacts such as closely spaced points far from nuclei. The NG approach is systematically improvable and generates smaller bases than previous methods. This approach has a number of other advantages over previous approaches; for example, one version allows for analytic computation of nearly all integrals defining the Hamiltonian.

In the next section, we review previous distorted grids and introduce nested grids.  The orthogonality pattern of nested grids is discussed in Section III. Nested grids require highly specific properties of the 1D functions used to make them; these properties are explained in Section IV. In Section V we introduce a modification of these bases which allows all integrals to be performed analytically. In Section VI, we present example electronic structure calculations for various small atoms and molecules. In Section VII, we summarize and conclude.

\section{Distorted grids}

Coordinate transformations can be applied either to finite difference grids (FDGs) or diagonal bases (DBs) where each basis function is peaked at a grid point. However, there are also additional techniques for DBs that do not directly apply to FDGs. 
In Fig. 1(a) we show  a very general type of distortion, where the grid is defined by a 3D mapping $\vec r(\vec u)$. In $u$-space the grid is uniform, but in $r$ space it is distorted to put more points near nuclei.  For an FDG, the Hamiltonian is constructed using standard differential approaches for curvilinear coordinates; for a DB, a Jacobian factor accompanies the distorting mapping ensuring orthonormality while approximately maintaining the moments giving the diagonal property. There are several key limitations of this method:  first, the distortion cannot be too large, or else the representation performs poorly, corresponding to the highly distorted squares in Fig. 1(a). Second, we are constrained to use the same number of points along each parallel grid line, so adding sufficiently many points near the nucleus may result in more points than necessary at the periphery of the atom.  These limitations result in rather large grid dimensions, e.g. $96 \times 72 \times 72$ for diatomics~\cite{gygi_real-space_1995}. Third, in the case DBs, the most accurate form of the interaction matrix---the integral diagonal approximation~\cite{white_hybrid_2017}---involves difficult  six dimensional numerical integrals. To our knowledge, the 3D distortions of Fig. 1(a) have only been tested with finite-differences. 

\begin{figure}[t]
    \includegraphics[width=0.96\columnwidth]{./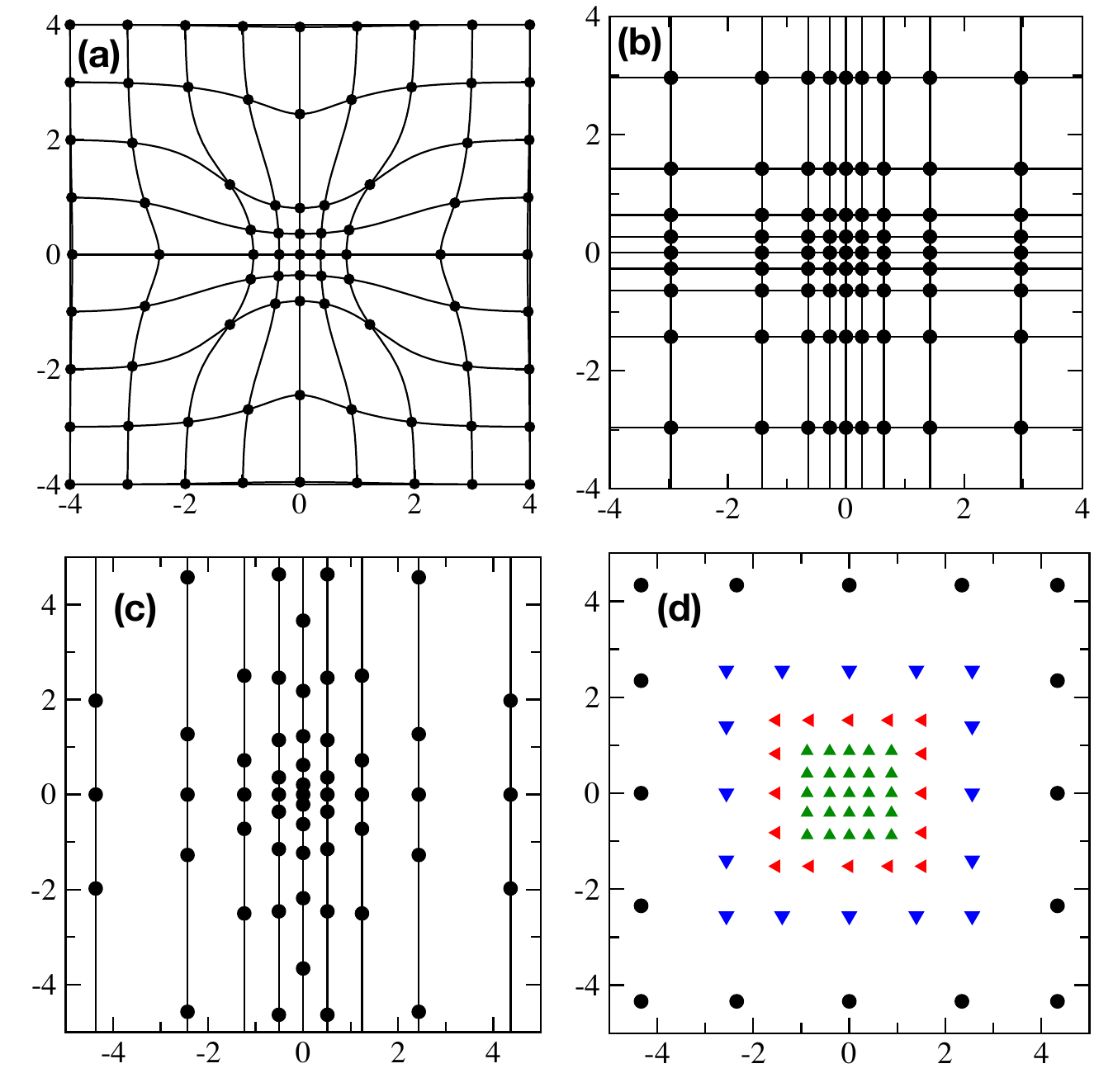}
    \caption{
    Four different approaches to variable resolution for grid/local basis discretizations, in the vicinity of a nucleus at the origin, shown here in 2D. Symbols represent the centers of local basis functions. (a) 3D mapping: a cubic grid is distorted by a 3D mapping function. (b) Products of 1D mappings, where the 3d mapping is a product of mappings for each coordinate direction. (c) Multislicing, where the x direction has one 1D mapping, but then each x point has a separate y mapping. (d) Nested gausslets, introduced here, which can achieve arbitrary resolution at the nucleus while maintaining appropriate spacings in all three directions in all regions. 
\label{Fig1}}
\end{figure}

 Fig.~1(b) shows a simplified mapping that allows for easy numerical integration: the transformation is defined by three 1D functions:  $x(u_x)$, $y(u_y)$, and $z(u_z)$, applied independently to each coordinate.  This approach has been tested on grids~\cite{gygi_real-space_1995} and used for gausslet bases. As long as the grid spacing does not change too rapidly, the diagonal representation of the interactions still works very well, but as mentioned above, the key drawback is that the same number of grid points must be used along all of the parallel grid lines in each direction. 
This method can be used either with FDGs or DBs, but an important advantage of DBs over FDGs is that standard atom-centered Gaussians can be added to the basis, orthogonalizing them with respect to the DB functions, in order to better represent the nuclei without extremely fine grids.  This approach has been used for micro-Hartree accuracy calculations of H$_2$ and He, as well as for Hartree-Fock calculations on H$_{10}$.

Fig. 1(c) depicts multislicing~\cite{white_multisliced_2019}, in which one coordinate mapping is used for the $x$-axis but each line of $y$-functions for fixed $x$ uses a different ($x$-dependent) $y$-mapping. In turn, each line of $z$ functions for fixed $x,y$ (not shown) is built from an $(x,y)$-dependent $z$-mapping. This method is appropriate only for DBs.   In the 2D figure, functions on different vertical lines are orthogonal because the $x$-functions are orthogonal; the different $y$-distortions can be completely different without harming the diagonal approximation. Multislicing eliminates the unwanted close spacing of functions along some coordinate directions (e.g. near (4,0) in Fig. 1(c)) but does not fix the problem in other directions (near (0,4)). Moreover, any rotational symmetry in the system is destroyed by multislicing. Nonetheless, it is more efficient than coordinate-product mapping (Fig. 1(b)), and it has allowed some of the most accurate fully interacting calculations on H$_{10}$ when coupled with DMRG.  

The Nested Gausslet (NG) approach introduced here is  illustrated in Fig. 1(d).  This approach produces orthonormal basis functions organized in shells, shown by the different colors in Fig. 1(d). The arrangement of function centers does not show any of the artifacts along coordinate directions of the earlier approaches, and it can thus produce more compact basis sets.  The shells are still rectangular along the coordinate directions, which in the case of a single atom is inferior to a spherical arrangement. However, the NG shell arrangement can be applied more generally than to a single atom, and the basis functions are composed of coordinate products of functions $f(x) g(y) h(z)$, making numerical evaluation of the integrals very fast. In principle, the basis can be extended to large atomic number $Z$ without an excessive number of functions. The detailed construction of the bases is described below. 

The NG approach allows us not only to zoom into the nucleus efficiently, but also to extend the basis with atom-centered Gaussians, as desired, yielding a hybrid approach. The added Gaussians can reduce the number of points in each shell  without significantly harming the diagonal approximation. 

While multislicing can be performed without significant constraints on the nature of the 1D distortions, the NG approach requires one to construct matching sets of 1D bases  with very special properties. At first glance, the desiderata seem very hard to achieve with only a small number of functions per shell. However, we have found a simple approach to constructing these matching 1D bases. This method is developed in the next few sections. 

\section{Orthogonality pattern for Nested Gausslets}

\begin{figure}[t]
    \includegraphics[width=0.96\columnwidth]{./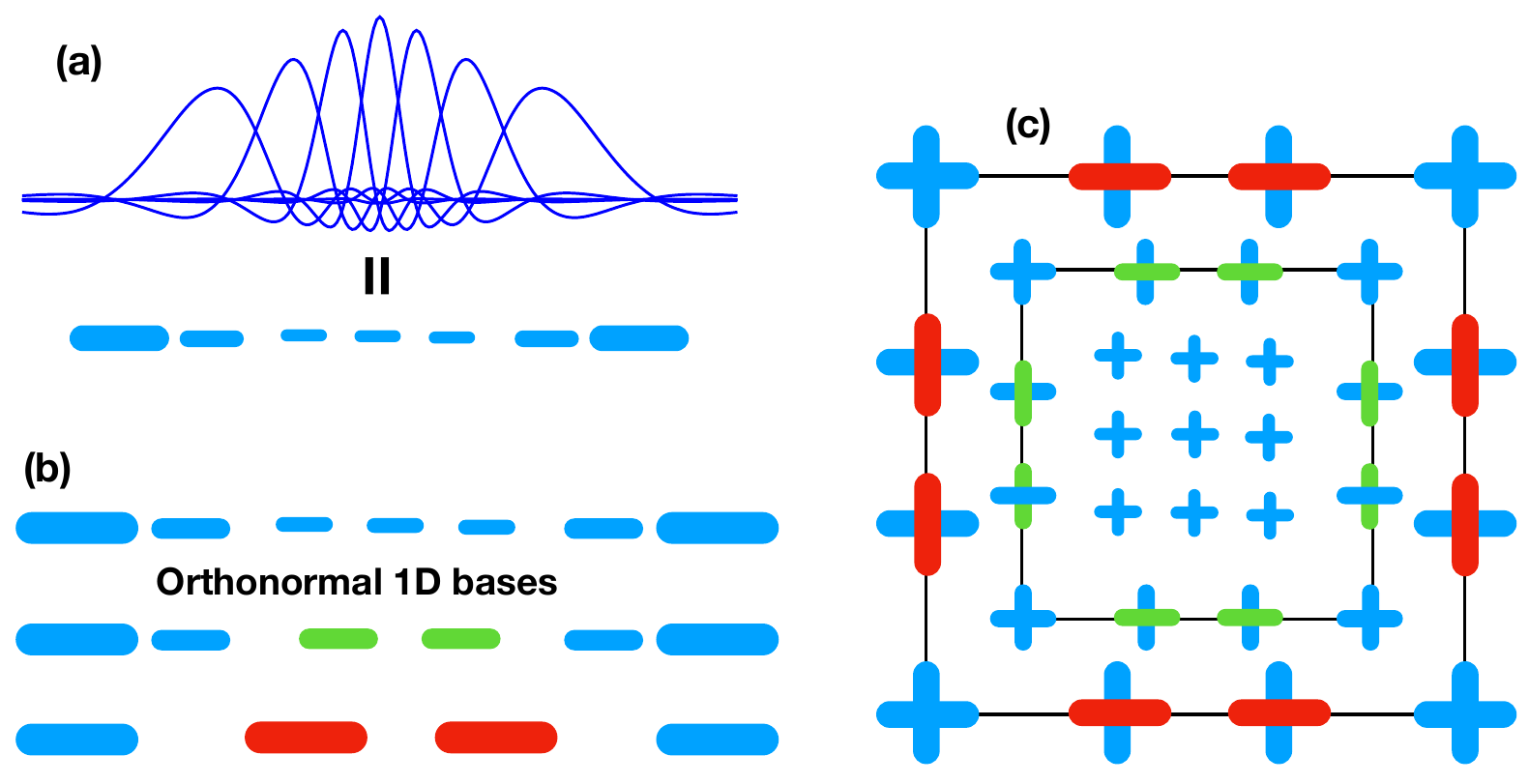}
    \caption{
    (a) Diagrammatic notation for the 1D basis functions used: the distorted 1D basis set is indicated by the array of blue bars. (b) Three orthonormal basis sets used for this construction. The blue functions in the different sets are identical; green and red functions, called {\it side functions}, are orthogonal to the blue functions in the same set. The top set is called the \emph{backbone}. 
    (c) The corresponding nested gausslet layout in 2D. In this simple example the same basis sets are used in both the $x$ and $y$ directions. Each cross is the product of an $x$ and $y$ function, $f(x) g(y)$, with the horizontal and vertical bar indicating $f$ and $g$, respectively. Each 2D function has at least one (blue) backbone function. The 2D functions on different shells (squares) are orthogonal because the backbone functions are orthogonal; on the same shell, they are orthogonal because of the orthogonality of the side bases to themselves and to the outer backbone functions. 
\label{Fig2}}
\end{figure}
We focus first on ensuring that the NG basis is orthogonal, deferring consideration of diagonality to the next section. The layout that ensures orthogonality in shown in Fig. 2. The first step in constructing the basis is to create \emph{backbones} for the $x$, $y$, and $z$ directions. The backbones are distorted 1D gausslet bases which could, e.g., be used as the univariate bases in Fig. 1(b), resolving all length scales that we desired to represent.

We let $n_s$ denote a shell size, i.e., the number of functions along the side of a shell; in Fig. 2, $n_s=4$. We next construct $n_s-2$ orthonormal \emph{side functions} for each shell and each coordinate direction. Each set of side functions replaces an interval of functions on the backbone with a smaller set of functions, and these functions are required to be orthogonal to all the outer backbone functions.  As long as these criteria are met, the whole basis will be orthogonal. The rest of the procedure for constructing the side functions is described in the next section.

Regarding completeness: the nested  construction is related to the coordinate-product mapping of Fig. 1(b): one way to get the side bases is by contraction over the inner sites of the
backbone. Fig. 1(b) can be considered as the limit of no contraction. Contractions which eliminate degrees of freedom should be possible because the backbone must represent the core, but the outer sides are far from the core where the wavefunctions are smoother.  

Generalization to 3D is straightforward:  the shells are cubes, and a basis function on a face but not on an edge is the product of two side functions and one backbone function. The inner side functions form an $(n_s-2)\times(n_s-2)$ grid. We call this construction \emph{singly-nested}, since only the shells are nested, but the faces are treated with rectangular grids. 

A \emph{doubly nested} version is also possible in 3D: each face of a shell would look like Fig. 2(c), with its own nested square shells. This generalization allows for a modest reduction in the number of functions. In the tests below we have used the simpler singly-nested approach.

\begin{figure}[t]
\includegraphics[width=0.99\columnwidth]{./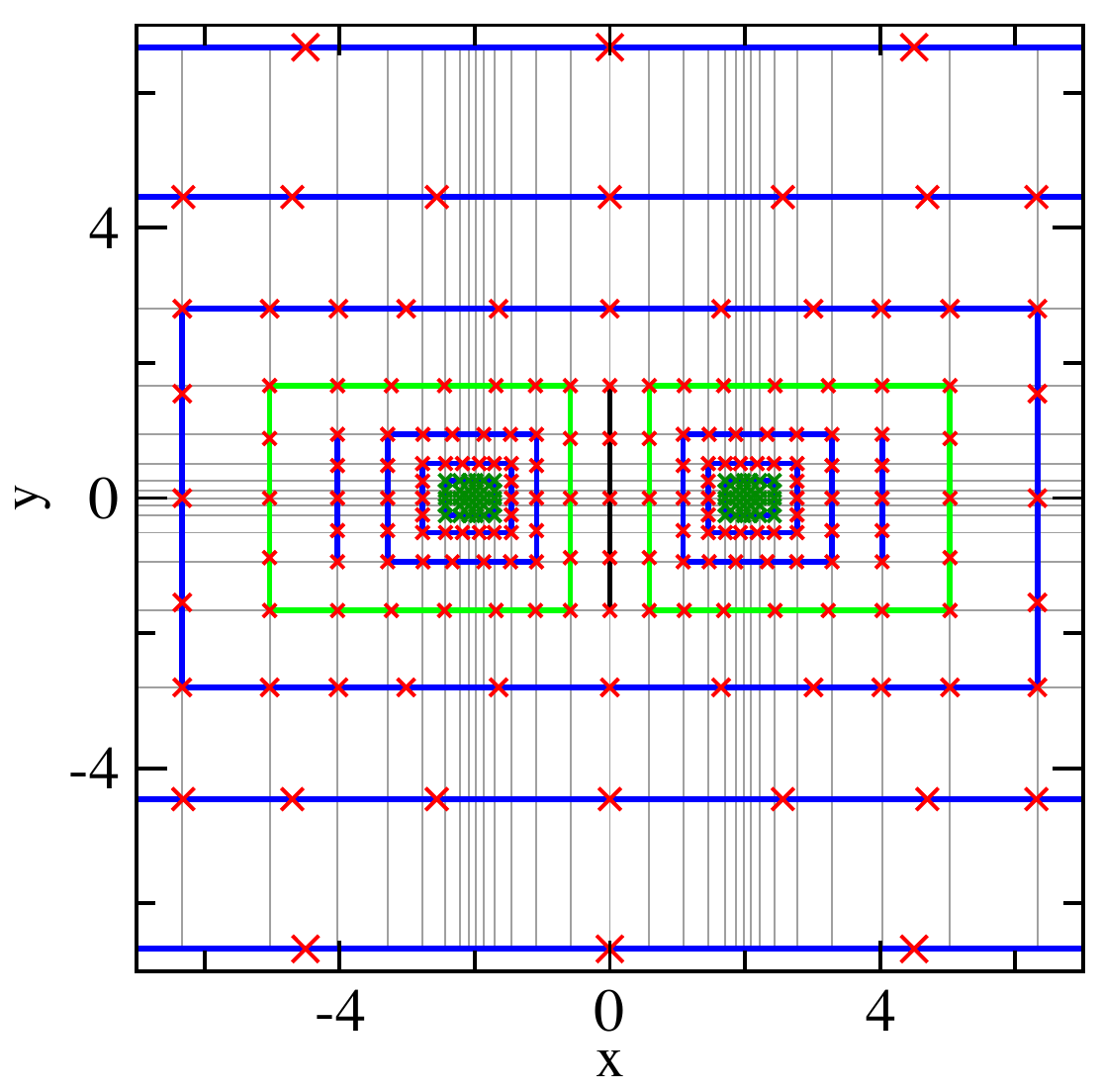}
    \caption{
    Basis function locations for a slice near $z=0$ for Be$_2$ with separation $R=4$ and shell size $n_s=5$.  The locations of the nuclei are at $(\pm 2,0)$. The thin gray lines represent the basis function locations of the $x$ and $y$ backbones. The red X's represent basis function locations. Each red X lies on at least one gray line. The blue lines represent outer shells.  As we progress inward, once the aspect ratio of an outer shell exceeds the number of atoms, the shell is replaced by individual shells for each atom (green rectangles). To maintain reflection symmetry, a flat shell one point wide in the $x$-direction is placed between the atoms (black line). Smaller shells zoom in on each nucleus. As we move toward the nuclei, when the number of gray lines intersecting the shell reaches $n_s$, then the innermost points form a grid based on the backbone functions (dark green region). 
\label{fig:splitting}}
\end{figure}

For more than one atom, a number of generalized constructions are possible.
For this paper, we restrict consideration to the relatively simple case of diatomics and linear chains, where we have developed a fairly simple heuristic prescription.  In Fig. \ref{fig:splitting} we show the arrangements for a typical treatment of Be$_2$ at a separation of $R=4$ a.u. The outer shells surround the whole molecule, while the inner shells switch to being centered on individual nuclei.   The splitting of shells is important for larger $Z$ atoms, but depending
on $R$ and the desired core resolution, the splitting may not occur. 
To maintain reflection symmetry of the basis in the $x$ direction, a flat shell is inserted at the splitting point $x=0$.  Additional flat shells may be inserted at the edges of inner boxes to try to keep the shells as isotropic (i.e., cubic) as possible, as at $x=\pm 4$.

The construction of side functions for these generalized scenarios is not much more complicated than for the single atom case as outlined in the next section. The only slight complication is that for the outer shells, we need to use an effective $n_s$ for the $x$-direction which is larger than the $n_s$ for the $y$ and $z$ directions on that shell.  For example, the smallest (blue) outer shell in   Fig. \ref{fig:splitting} has 11 functions along the $x$ direction.  We choose this number to make the spacing of functions nearly the same in the $x$ and $y$ directions. 


\section{Diagonal 1D bases}
We enumerate several properties that we require of a 1D basis $\{S_i (x) \}$ in order to construct a diagonal approximation.

The first property is {\it completeness} (C). We say that a basis of 1D functions satisfies property C if over the range of interest all polynomials of degree at most $p$ lie in its span. For $p$ sufficiently large, this property guarantees that the basis is rich enough to represent the target wavefunction.

Second is \emph{orthonormality} (O):
\begin{equation}
\label{eq:o}
\int S_i(x) \, S_j(x) \, dx = \delta_{ij},
\end{equation}
or more loosely, as shall be clear from context, orthogonality.

Third is the \emph{moment} property (M), which requires that the basis functions behave like delta functions when integrated against low-order polynomials up to some order $q$. Concretely, we require:
\begin{equation}
\int S_i(x) \, (x-x_i)^m \, dx = w_i \, \delta_{m0} 
\label{eq:mom}
\end{equation}
for a function center $x_i$ and $m = 0, \ldots, q$. Here $w_i := \int S_i (x) \, dx$ is called the weight of the basis function.

Since the zeroth-order moment property is always guaranteed by proper definition of the weight $w_i$, it is equivalent to require
\begin{equation}
\int S_i(x) \, (x-x_i) \, P(x) \, dx = 0
\label{eq:mom2}
\end{equation}
for all polynomials $P$ of order at most $q-1$.

Daubechies compact wavelet-scaling functions typically satisfy properties CO, but not M. (In our discussion of `wavelets' as candidates for basis functions with the diagonal property, we actually mean the \emph{scaling function}, not the wavelet itself.) One particular type of compact wavelet, the coiflet, satisfies COM. Meanwhile, sinc functions satisfy CO, but higher order moments diverge due to their algebraic (i.e., subexponential) decay. Many non-compact wavelets (e.g., Meyer scaling functions) satisfy COM. Some quantitative comparisons are offered in Appendix C.


Finally, we highlight a fourth important property: \emph{X-diagonalization} (X). This property holds if the functions diagonalize the $x$-coordinate operator, i.e., if 
\begin{equation}
\label{eq:x}
\int S_i(x) \, x \,  S_j(x)  \, dx = \tilde x_i  \, \delta_{ij}, 
\end{equation}
where  $\tilde x_i$ is a function center which may be different from the moment center $x_i$.
X-diagonalization is important in the construction of discrete variable representations~\cite{light_discrete-variable_2000}, which are related to our diagonal bases. 

Coiflets satisfy COM but not X. However, the ternary wavelets of Evenbly and White (EW)~\cite{evenbly_representation_2018} satisfy COMX.  While the satisfaction of COM was essential to their construction, the X property comes about from their even symmetry. Indeed, since $S_i(x)$ is even about $x_i$ and $S_j(x)$ is even about $x_j$, their product is even about the midpoint $(x_i + x_j)/2$.  Thus $\int S_i(x) \, (x-x_m) \, S_j(x) \, dx = 0$.  Since $S_i$ and $S_j$ are orthogonal for $i \neq j $, it follows that $\int S_i(x)\,  x \, S_j(x) \, dx = 0$ when $i \neq j$, and property X is guaranteed. The non-evenness of coiflets interferes with property X. This non-evenness is tied closely to their compactness coupled with their usual binary construction. Thus the EW ternary wavelets are very unusual in being COMX and compact.

Gausslets are based on EW ternary wavelets~\cite{white_hybrid_2017}, yet they satisfy some properties only inexactly up
to high precision. The lack of exactness comes from their simple
analytic presentation as linear combinations of Gaussians. Their completeness, for example, holds up to a tolerance of roughly $10^{-8}$, so that when they are used in a variational calculation one still can obtain energies accurate to order $10^{-16}$, i.e., floating point double precision. Thus for practical purposes, gausslets are COMX. They are also for practical purposes compact, since their non-compact tails decay as $e^{-\gamma x^2}$ where $\gamma = \frac{9}{2}$. 

We now prove an important theorem relating C, O, M and X.

\begin{theorem}
\label{thm:comx}
{\rm (COMX Theorem)} Suppose that the collection of functions satisfies properties C, O, and X and in particular satisfies C with polynomial completeness up to order $p$. Then in fact the collection also satisfies property M up to order $q = p+1$ (where the function centers $x_i$ are chosen to be the same as the function centers  $\tilde{x}_i$ in property X).
\end{theorem}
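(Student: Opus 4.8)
The plan is to reduce property M to the equivalent zero-moment form of \eqref{eq:mom2} and then to exploit completeness in a self-referential way: because the basis is polynomial-complete up to order $p$, any low-order polynomial can be expanded \emph{in the basis itself}, at which point properties O and X cause the relevant integrals to collapse. First I would recall the remark preceding the theorem, so that it suffices to prove
\begin{equation}
\int S_i(x)\,(x-x_i)\,P(x)\,dx = 0
\end{equation}
for every polynomial $P$ of degree at most $q-1=p$, where throughout I take the moment center $x_i$ to coincide with the X-center $\tilde{x}_i$ of \eqref{eq:x}.

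Next I would invoke property C: since $\deg P \le p$, completeness guarantees that $P$ lies in the span of $\{S_j\}$ over the range of interest, so I may write $P=\sum_j c_j S_j$ there. Substituting and using linearity of the integral gives $\int S_i(x)\,(x-x_i)\,P(x)\,dx = \sum_j c_j \int S_i(x)\,(x-x_i)\,S_j(x)\,dx$. The key algebraic step is then to split the factor $x-x_i$ and apply X and O termwise:
\begin{equation}
\int S_i(x)\,x\,S_j(x)\,dx - x_i \int S_i(x)\,S_j(x)\,dx = \tilde{x}_i\,\delta_{ij} - x_i\,\delta_{ij}.
\end{equation}
With the center choice $x_i=\tilde{x}_i$, each summand vanishes for every $j$, so the entire sum is zero and the displayed zero-moment identity holds. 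Finally, specializing to $P(x)=(x-x_i)^{m-1}$ for $m=1,\dots,p+1$ yields $\int S_i(x)\,(x-x_i)^m\,dx=0$ for those $m$, which together with the definition of $w_i$ is precisely property M up to $q=p+1$.

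The main obstacle is not the algebra, which is short, but the localization bookkeeping hidden in the phrase \emph{over the range of interest}. The completeness expansion $P=\sum_j c_j S_j$ is only valid on that range, whereas the moment integral runs over the support of $S_i$; the substitution is therefore legitimate only when the support of $S_i$ lies inside the region where C holds. The argument is clean for \emph{interior} functions, and I would state the theorem as applying to these, with boundary functions requiring either separate treatment or the standing assumption that supports remain within the range of interest. I would also flag, in the infinite-basis case, that one needs the local expansion $P=\sum_j c_j S_j$ to be a genuine (locally finite or suitably convergent) representation on the support of $S_i$; in the practical setting the bases are effectively finite and local, so this causes no difficulty.
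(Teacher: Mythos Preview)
Your argument is correct and follows essentially the same route as the paper: combine properties O and X to conclude that $\int S_i(x)\,(x-x_i)\,S_j(x)\,dx=0$ for all $i,j$, then use polynomial completeness to write $P$ as a linear combination of the $S_j$ and take the corresponding linear combination of these identities. The only cosmetic difference is ordering---the paper first writes down the combined O/X identity and then expands $P$, whereas you expand $P$ first and then apply O and X termwise---and your final paragraph on the ``range of interest'' and boundary caveats goes beyond what the paper's proof itself addresses (the paper treats completeness as holding exactly and relegates tolerance issues to a separate remark).
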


Before giving the proof, we pause to include some remarks on the interpretation.

It is instructive to think of properties C and M as properties of the \emph{subspace} spanned by our basis of functions. The reason is that, given such a subspace, diagonalizing the X operator on this subspace always yields the unique basis of functions that span this subspace and in addition satisfy properties O and X. Therefore, O and X are guaranteed for free, and the orders $p$ and $q$ to which C and M are respectively satisfied by this basis can be thought of as intrinsic properties of the subspace.

From this point of view, the COMX Theorem 
then indicates that to satisfy COMX, it actually suffices to construct a subspace that is polynomially complete to some desired order (C). The moment property M then also comes for free via diagonalization of the X operator.

Note with caution, however, that COMX can be satisfied by a basis of functions that are not very localized. Indeed, from the subspace point of view, there exist subspaces with excellent completeness properties that are impossible to localize. For example, one could add a very high frequency Fourier mode to a basis that is already localizable. The resulting basis is then impossible to localize without altering the subspace, even though diagonalization of the X operator on this subspace would produce a basis satisfying COMX.

\begin{proof}[Proof of Theorem \ref{thm:comx}]
Let $\{ S_i \}$ satisfy C, O, and X, with polynomial completeness (C) up to order $p$. We will show that \eqref{eq:mom2} holds with $q = p+1$. As such let $P$ be a polynomial of order at most $p$.

Linearly combining equations \eqref{eq:o} and \eqref{eq:x}, in which we take $x_i = \tilde{x}_i$, we obtain 
\begin{equation}
\label{eq:combineCX}
\int S_i (x)\,  (x - x_i ) \,  S_j (x) \, dx = 0
\end{equation}
for all $i,j$.

Now fix $i$. Since the collection $\{ S_j (x) \}$ is polynomially complete to order $p$, there exists a linear combination of the $S_j$ that matches the target polynomial $P$. Accordingly, taking an appropriate linear combination of \eqref{eq:combineCX} over the index $j$, we obtain the equality
\begin{equation}
\int   S_i (x) \, (x - x_i ) \, P(x) \, dx = 0,
\end{equation}
as was to be shown.
\end{proof}
\begin{remark}
Note that the argument in the proof assumes that polynomial completeness (C) holds \emph{exactly} up to order $p$. In practice, C may hold only up to some small numerical tolerance, and likewise the moment property M will then hold only up to a small tolerance. It is difficult to formulate a clean notion of numerical tolerance for C since polynomials grow without bound, so instead we outline a small mutation of the proof that controls the error in property M resulting from imperfect satisfaction of property C.

For any fixed $i$ and polynomial $P(x)$ of order at most $p$, let $R(x)$ denote the remainder of some fit of $P(x)$ with a linear combination of the $ S_j (x) $. Then equation~\eqref{eq:mom} for property M holds with error
\[
\left\vert \int S_i (x) \, (x-x_i ) \,  R(x) \, dx \right\vert  \leq \sigma_i \,  \sqrt{\int R(x)^2 \, \vert S_i(x) \vert \,dx},
\]
where $\sigma_i := \sqrt{ \int (x-x_i)^2 \, \vert S_i (x) \vert \, dx}$ and the inequality holds by Cauchy-Schwarz. Hence to control the theoretical worst-case error in property M it makes sense to obtain the fit in the weighted least squares sense with respect to the weight function $\vert S_i (x) \vert$. In practice, the violation of M can simply be estimated numerically. 
\end{remark}

\begin{figure}[t]
\includegraphics[width=0.75\columnwidth]{./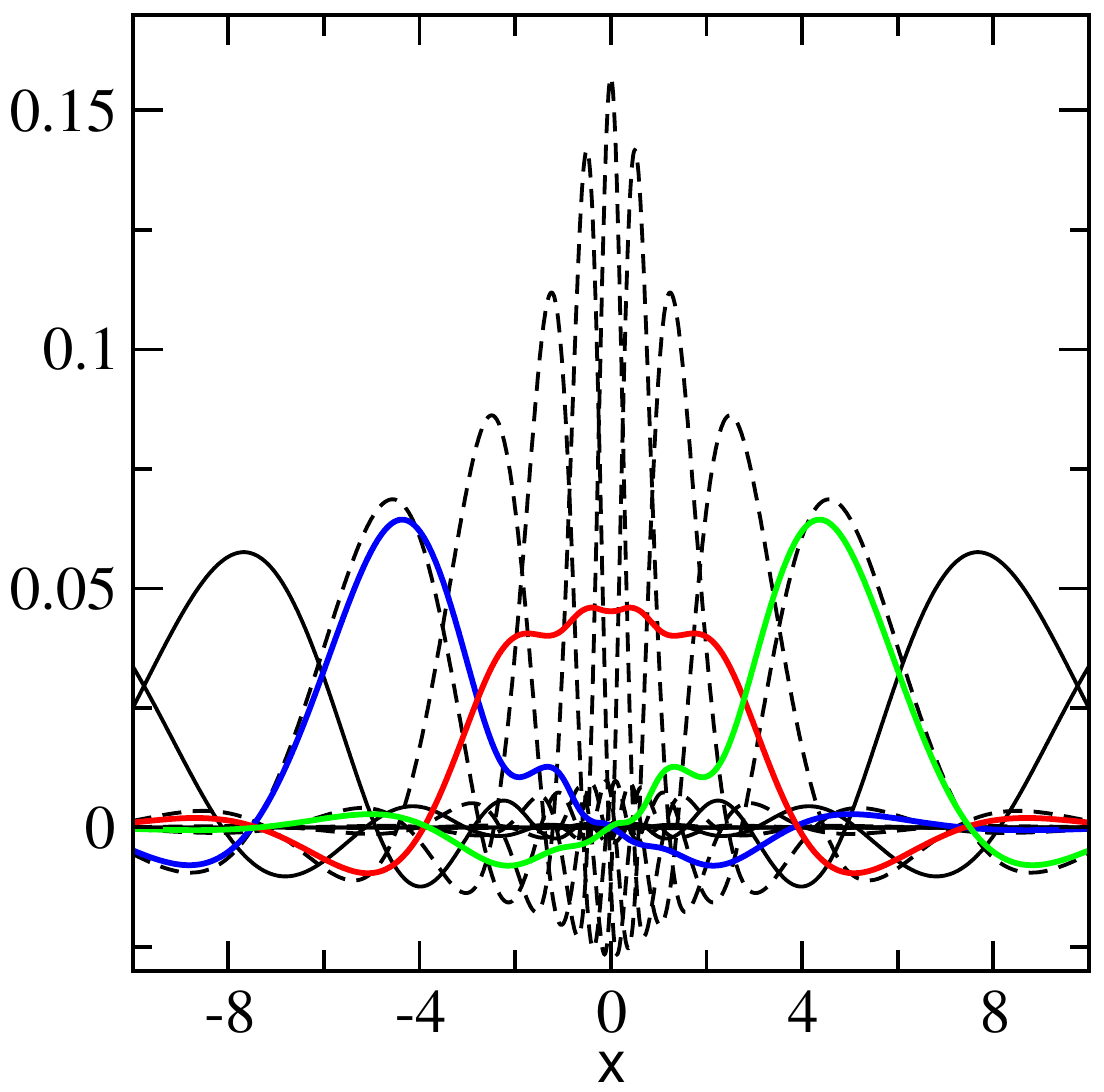}
    \caption{
    Backbone and side basis with $n_s=5$. The backbone, shown in black (solid and dashed) is a distorted set of gausslets using the sinh coordinate transformation with $a=s=0.7$. The side basis (colored lines) is constructed as a contraction of the inner backbone (dashed lines).   The full side basis is composed of the solid black lines, together with the colored lines. 
    It is orthonormal and complete up to quadratic polynomials, with moments up to cubic vanishing. The corresponding shell in the nested gausslet construction uses the inner $n_s=5$ functions of the full side basis. 
\label{fig:backboneside}}
\end{figure}

A first important practical consequence of the theorem is that it makes it easy to construct the side functions.  Assume that the backbone $\{b_i (x) \}$ is COMX.
The side basis can be constructed by replacing an interval of functions from the backbone with a smaller set of $n_s-2$ functions $\{S_i(x)\}$. In particular, we can choose the $S_i$ as linear combinations of the $b_i$, i.e., $S_i = \sum_j A_{ij} b_j$, where $AA^{\top}=I$ guarantees orthonormality. More specifically, for optimal completeness, we project the monomials $x^j$ onto the space spanned by the $b_i$, yielding  the functions  $\tilde S_j = \sum_k b_k \langle b_k| x^j \rangle $ for $j=0,\ldots n_s$. Then we orthonormalize the $\tilde S_j$ and diagonalize the X operator projected to this basis. The side functions $S_i$ are recovered as the resulting eigenfunctions. An example of the resulting side basis functions is shown in Fig. \ref{fig:backboneside}.

Since the $b_i$ are orthogonal to the outer backbone functions, so are the $s_i$. Since the $x$ matrix for the backbone is diagonal, diagonalizing within the $s_i$ space makes the whole sequence X. The side basis is complete up to polynomials of order $n_s-3$. Satisfying COX,
the $s_i$ also satisfy M up to order $n_s-2$. 

We comment that during this construction (specifically, within the expression for $\tilde{S}_j$), it is not necessary to compute the high-order polynomial overlaps $\phi_{kj} =\langle b_k|x^j\rangle$ explicitly, and in fact a direct approach may be subject to rounding errors.

Instead we adopt the following approach. To begin we set some notation and assumptions. Let $P$ and $Q$ project onto the backbone space within and outside the interval, respectively, so in particular $PQ = QP = 0$. In the following, for emphasis and clarity we will use the notation $[f(x)]$ to indicate the diagonal operator that performs pointwise multiplication by $f(x)$, and we will use the notation $(f(x))$ to denote the function itself. In particular, $(1)$ denotes the constant function with value 1. Normally we would omit such special notation when the meaning is clear. Because of the X property of the backbone, the $P$ and $Q$ spaces block diagonalize $[x]$, and therefore $P[x]Q = 0$. We assume that the backbone space is polynomially complete at least up to order $j-1$. Since the projector onto the backbone space is $P+Q$, this means that $(P+Q)(x^k) = (x^k)$ for $k=0,\ldots,j-1$.

Then observe that $\phi_{kj} = \langle P b_k \vert (x^{j}) \rangle = \langle b_k \vert P (x^{j}) \rangle$. Moreover, we can write $(x^{j}) = [x] (x^{j-1}) $, so $\phi_{kj} = \langle b_k \vert P [x] (x^{j-1}) \rangle$. But then we can substitute the identity $(x^{j-1}) = (P+Q)(x^{j-1})$ and use the fact that $P [x] Q = 0$ to obtain $\phi_{kj} = \langle b_k \vert P [x] P (x^{j-1}) \rangle$. Continuing in this fashion, we obtain  $\phi_{kj}=\langle b_k|P [x] P [x] \cdots [x] P (1) \rangle$.

Now note that $P=\sum_k |b_k\rangle\langle b_k|$, and let $X = (X_{kl}) = (\langle b_k \vert [x] \vert b_l \rangle )$ denote the matrix of the $[x]$ operator within the $b_k$ basis. Then it follows that the $j$-th column $\phi_j$ of $(\phi_{kj})$ can be written as the matrix-vector product $\phi_j = X^j \xi$, where $\xi$ is the vector with entries $\xi_m = \langle b_m \vert 1 \rangle$.

Therefore the required polynomial functions are spanned by the Krylov space of the matrix $X$ acting on the initial vector $\xi$.  It follows that we can use the Lanczos procedure to directly construct an orthogonal basis for the side functions for an interval. Then a final diagonalization gives the COMX side basis.

\section{Pure Gaussian Distorted gausslet bases}

A uniform gausslet basis, since it is formed from contractions over Gaussians, allows all standard integrals to be computed analytically. Adding a coordinate transformation eliminates this property.  The smoothness of gausslets and the transformations we use nevertheless allows for effective numerical quadrature, and previous use of gausslets have been based on numerical quadrature, coupled with the decomposition of $1/r$ into a sum of Gaussians as discussed in Appendix B. In this section we introduce an approach which restores the analytic integration. 

A 1D gausslet with standard unit spacing is defined as a contraction over a uniform array of Gaussians with width $1/3$ and spacing $1/3$, i.e.,
\begin{equation}
G(x) = \sum_i c_i e^{-\frac{9}{2}(x-i/3)^2},
\end{equation}
where the sum runs over about 100 nonzero values of $c_i$. A coordinate transformation of the gausslet acts on the underlying Gaussians, leaving the contraction the same. Concretely, given a distortion $u(x)$ with density $\rho(x)$ and inverse $x(u)$ (see Appendix A) we can define the \emph{distorted gausslet (DG)} as 
\begin{equation}
\sum_i c_i e^{-\frac{9}{2}(u(x)-i/3)^2} \sqrt{\rho(x)}.
\end{equation}
The functions so defined form an orthogonal diagonal basis. 

Our goal is to replace the DG basis with a basis that permits analytic integrations. Now because the coordinate transformation we use is slowly varying compared to the gausslet spacing, and because the Gaussians are spaced three times as finely, the individual Gaussians comprising the gausslet are mapped under this transformation to functions which still resemble Gaussians. However, the locations, widths, and amplitudes must be appropriately transformed. The center of the $i$-th transformed Gaussian is given by $x_i = x(i/3)$, 
its nominal width  is $W_i = 1/[3 \rho(x_i)]$, and its amplitude is increased by a factor of $\sqrt{\rho(x_i)}$.
 
In the \emph{pure Gaussian distorted gausslet (PGDG)} basis, we replace the individual distorted Gaussians with undistorted Gaussians defined using these transformed parameters:  
\begin{equation}
\tilde G(x) = \sum_i c_i \sqrt{\rho(x_i)} e^{-\frac{1}{2 W_i^2}(x-x_i)^2}.
\end{equation}
This replacement destroys the exact orthonormality of the basis, i.e., property O. Therefore, as discussed in the previous section, we orthogonalize and X-diagonalize the $\tilde G$ functions. This maneuver exactly restores properties OX and approximately restores the CM properties. We find that the new functions are very similar to the originals, as we illustrate in Fig. \ref{fig:pgdg}, which shows how similar a distorted gausslet and its corresponding pure Gaussian analog are for a typical distortion. Moreover, we find that the subspace spanned by the $\tilde G$ appears to be very similar to that of the distorted gausslet basis. This point is illustrated in Fig. \ref{fig:fitstogau}, which demonstrates the fitting of Gaussian type functions 
using both DG and PGDG bases. The results are almost identical to a surprising degree. Note that the distortion itself makes the CM properties only approximate for distorted gausslets; the
replacement with undistorted Gaussians might be expected to make the CM errors bigger, but in practice any such effect appears to be insignificant. 

\begin{figure}[t]
\includegraphics[width=0.6\columnwidth]{./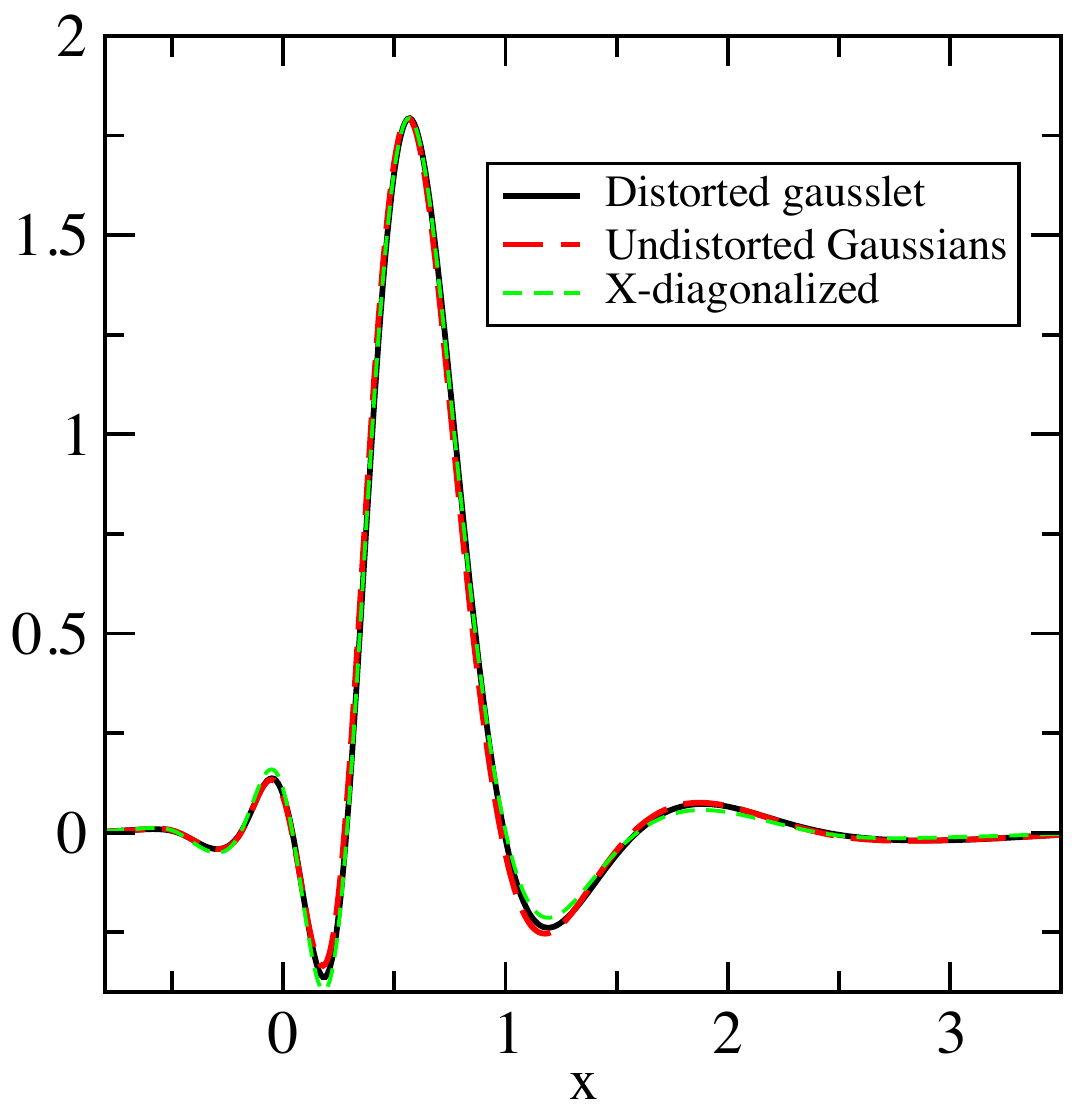}
    \caption{
For a typical coordinate mapping for a nucleus at the origin, the black line shows a standard distorted gausslet. The red dashed curve shows the function with the underlying distorted Gaussians replaced by undistorted Gaussians.  The green dashed curve shows the function after the new set of functions is orthogonalized and X-diagonalized.  The functions are all very similar.
\label{fig:pgdg}}
\end{figure}

\begin{figure}[t]
\includegraphics[width=0.7\columnwidth]{./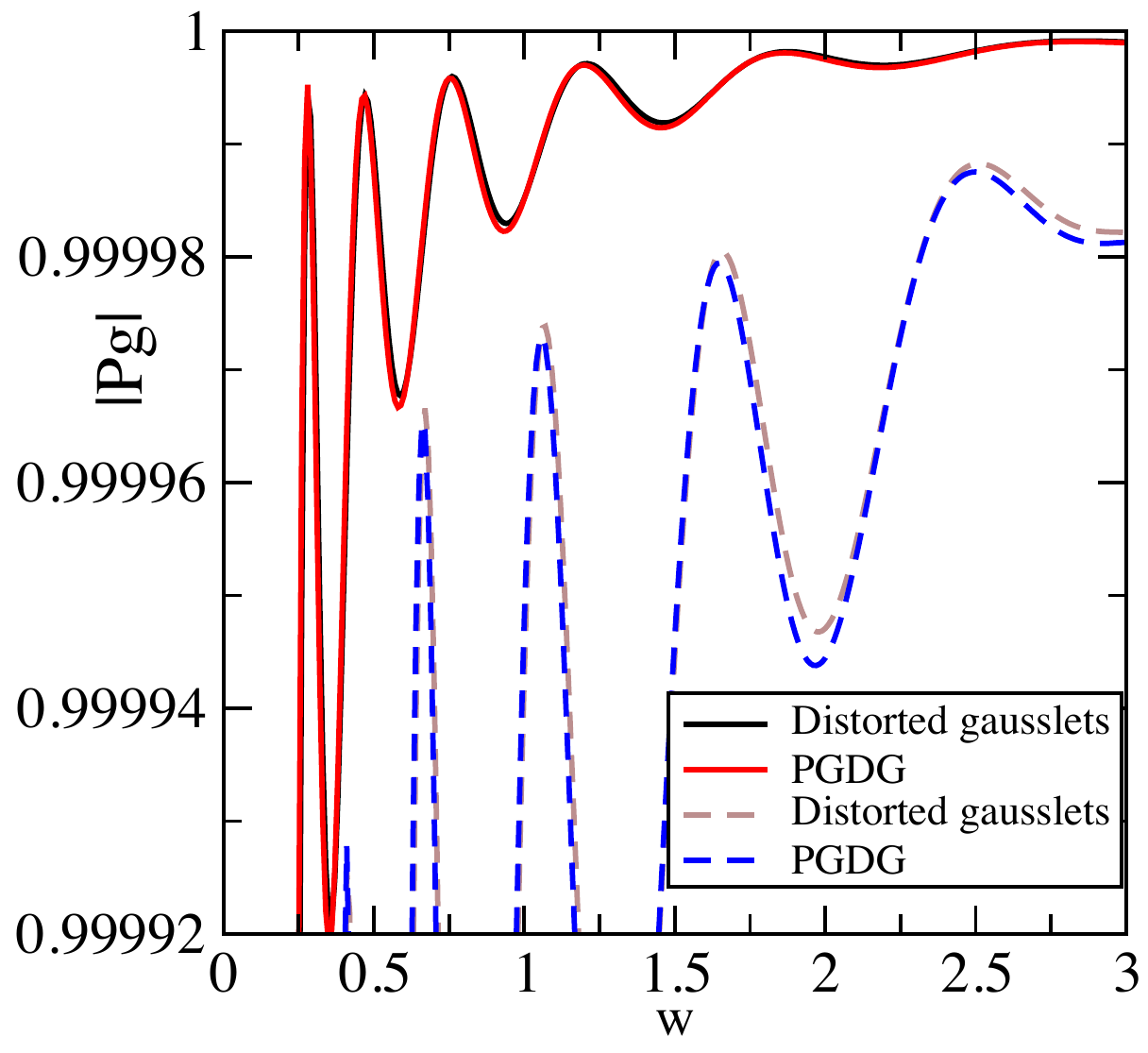}
    \caption{
The upper two curves show fits to a normalized Gaussian with the specified width $w$, $\exp(-\frac{1}{2}x^2/w^2)$, for a pure distorted gausslet basis and the corresponding PGDG basis. The fit is defined as $|P g|$, where $g$ is the Gaussian, $P$ is a projector into the basis, and an L2 norm is taken. A perfect fit would give 1. The lower two curves show similar fits for $x$-Gaussians,  $x \exp(-\frac{1}{2} x^2/w^2)$.  The close similarity of the curves shows that the span of the DG and PGDG bases are nearly identical, justifying the use of the PGDG bases, which have analytic integrals.
\label{fig:fitstogau}}
\end{figure}

Note that since the side functions are linear combinations of the backbone functions (for each direction $x$, $y$, or $z$), knowing all integrals involving the backbone functions allows us to produce all 1D integrals via matrix multiplications. The gausslets comprising the backbone in a particular direction share the same Gaussians, so it suffices to obtain all integrals for this one set of Gaussians of size $N_g$ (for each direction). Typically $N_g \sim 200$. The number of 1D integrals behaves as $N_g^2$ for the kinetic energy and $N_g^2 M$ for the Coulombic cases, where $M \sim 100$ is the number of Gaussians used to decompose $1/r$ (see Appendix B). This modest number ($\sim 10^7$) of analytic  1D integrals can be computed very quickly, and their 1D linear transformations to represent backbone and side functions reduce to fast matrix-matrix multiplications. 

\begin{figure}[t]
\includegraphics[width=0.8\columnwidth]{./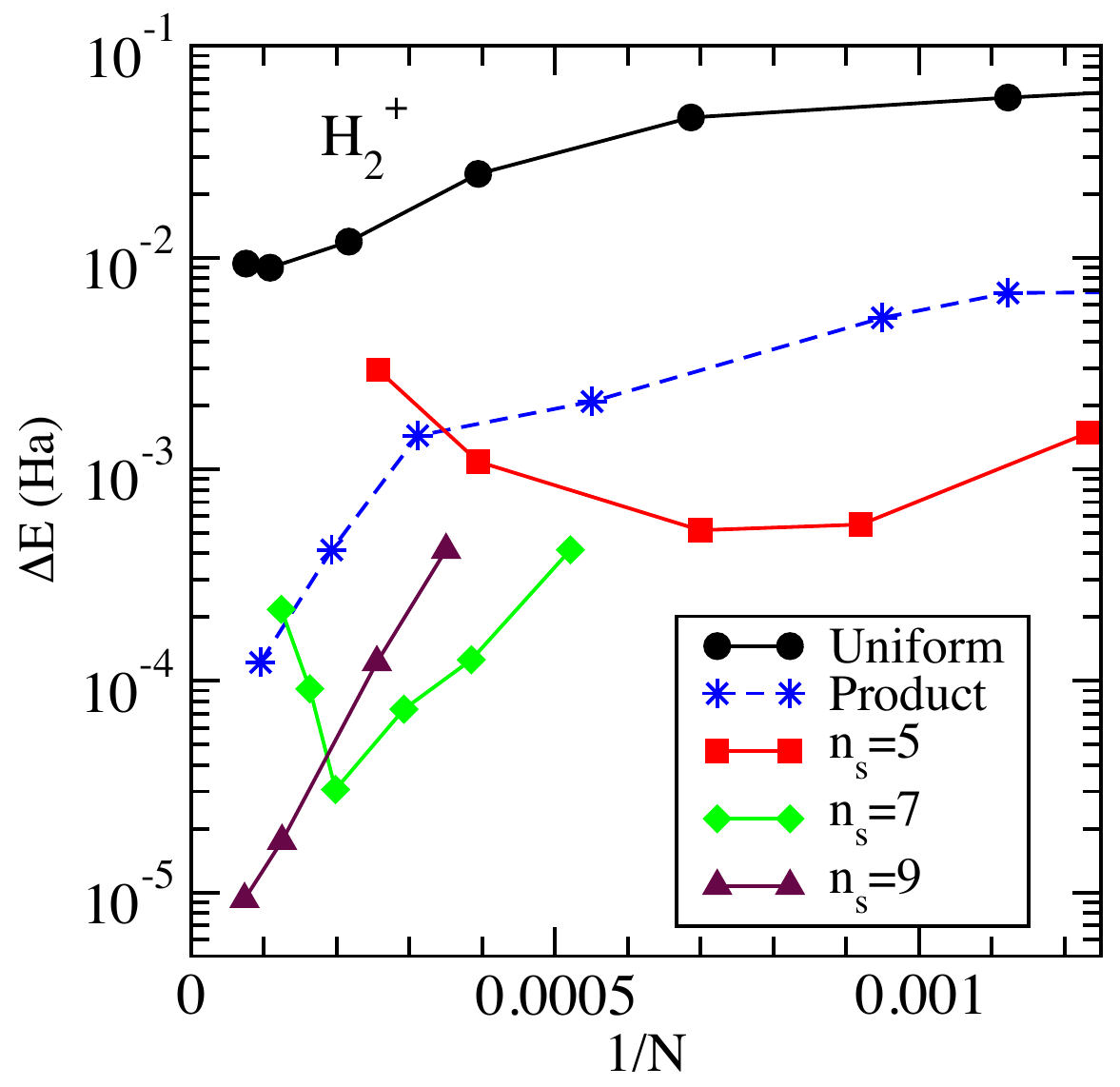}
    \caption{
Comparison of results for H$_2$ positive ion at separation $R=2$ a.u. for the non-hybrid method, i.e., without any additional 3D Gaussians from a standard GTO basis. In each case, the error in the energy relative to the precise value of  -0.60263462~\cite{scott_new_2006} is shown as a function of the inverse of the number of basis functions. The curve labeled Uniform represents an equal spaced array of gausslets. The curve labeled Product is based on the product coordinate transformation of Fig. 1(b). The other three curves are based on nested PGDG bases with the given side dimension $n_s$. One moves to the left along these curves as the discretization of the backbone is refined.
\label{fig:figcomp}}
\end{figure}

Our basis consiting of a total of $N_b$ 3D functions can be formed from products of 1D functions of the form $f(x) g(y) h(z)$ . Since we use the diagonal approximation for the electron-interaction, not only are the nuclear potential and kinetic energy represented by $N_b \times N_b$ matrices, but the electron-electron interaction is as well. The final construction of the two Coulombic matrices scales as $N_b^2 M$, which allows us to treat $N_b \sim \textrm{10,000}$ on a desktop. Note that this is better scaling than the $N_b^3$ required to simply diagonalize the one electron part of the Hamiltonian.

As a first measure of the effectiveness of the PGDG nested gausslet approach, we compare calculations for the hydrogen molecular ion in Fig. \ref{fig:figcomp}.
The nested approach gives nearly an order of magnitude improvement in the energy error over the product coordinate transformation for a fixed number of basis functions. Meanwhile it yields a couple of orders of magnitude improvement over a naive uniform basis, which is in turn expected to perform similarly to a plane wave basis of the same size.  Note that each curve for fixed $n_s$ has a minimum, beyond which increasing the radial accuracy (i.e., more shells) with fixed angular accuracy (i.e., fixed $n_s$) is counterproductive. Note that all of these bases can serve as part of a hybrid Gaussian/gausslet basis, discussed in the next section. Adding some Gaussians improves all the approaches and obscures the difference between them, so for Fig. \ref{fig:figcomp} we omit Gaussians, but in general, we always prefer the hybrid bases.

\section{Hybrid nested gausslet/Gaussian bases}
Atom-centered basis sets, such as the standard Gaussian type orbital bases (GTOs) used throughout quantum chemistry, have impressive completeness for their size which is hard to duplicate directly with NGs. However,
we can incorporate all or part of a GTO basis into an NG basis.  The point of doing this, rather than simply using a GTO basis, is to retain the diagonal form of the interactions. Moreover, compared to GTOs, the NG basis is easier to systematically improve away from the atom centers. How one combines gausslet and GTOs while maintaining diagonal interactions is described in Ref.~\cite{qiu_hybrid_2021} for product coordinate transformations, and no significant changes are needed for nested gausslets. Performing the integrals needed for the hybrid PGDG/GTO approach is particularly convenient: all the integrals in both bases are analytic.

The most important purpose of the GTOs is to resolve the nuclear cusp efficiently, reducing the number of shells needed. They also help supplement the completeness of the NG basis, by, for example, helping restore spherical symmetry near the atom centers. These benefits are largely derived from the low angular momentum functions, and little benefit is seen from adding functions beyond
S and P.  In most cases we add just the S and P functions.

The main idea underlying the hybrid basis construction is to view the gausslets as the main functions and orthogonalize the GTOs to the gausslets, forming Residual Gaussians (RGs).  The RGs have quite low occupancy; they serve as corrections to nuclear cusps and are generally highly oscillatory. While they are important for high-accuracy single-particle properties, namely the kinetic and nuclear potential energies, they are less important for the electron-electron interactions. Therefore simplistic approximations for their interaction terms are good enough. The simplest approximation, called Gaussian–gausslet transfer (GGT), replaces each RG by the gausslet closest to the center of the RG and simply uses the gausslet-gausslet integrals, which one must compute anyway. Although some other approximations are somewhat better than GGT,  the error discrepancies are small compared to other errors, such as the overall diagonal approximation error incurred just by the gausslets. 

Here we introduce and utilize an alternative to GGT which is about as simple, but which has smaller errors. A collection of GTOs $G$ is orthogonalized to the gausslets and symmetrically orthonormalized to yield corresponding RGs $\hat G$. Now $\hat G$ and $G$ are nonzero in the same general area, although $\hat G$ is largely high-momentum and somewhat more extended than $G$. However, $G^2$ and $\hat G^2$ are more similar to each other, in that both have low-momentum components which contribute strongly to the interactions. These low-momentum components can be represented by the gausslets $\{g\}$.  In the \emph{density transfer approximation (DTA)} we approximate the low momentum components of $\hat G^2$ by $N_G \sum_g g^2 \langle g|G \rangle^2$, where $N_G = 1/\sum_g \langle g|G\rangle^2$ is a normalization factor which lends the sum an interpretation as a weighted average. In this expression we have approximated a double sum over $g$ and $g'$ as a diagonal sum over $g$, since the off-diagonal products $g g'$ have no zero momentum components due to orthogonality. 
If our diagonal two-electron gausslet-gausslet interaction is described by a matrix $V_{gg'}$, then DTA corresponds to using  $V_{g\hat G} = N_G \sum_{g'} V_{gg'} \langle g'|G\rangle ^2 $ and $V_{\hat G\hat G'} = N_G N_{G'}\sum_{g'g''} V_{g'g''} \langle g'|G\rangle ^2 \langle g''|G'\rangle^2$. 

Comparing DTA to GGT, we find that DTA is better. For example, one check on the interaction matrix for an atom is to evaluate the repulsion for two electrons in a hydrogenic 1S orbital versus atomic number $Z$: it should evaluate to $5/8 Z$.  To test this result, we obtain the 1S orbital (within the basis) as the ground state of the one-electron Hamiltonian.  For a typical case, a He atom with $n_s=5$ and a minimal spacing of the gausslet basis at the origin of 0.3, with the $S$ and $P$ functions of the cc-pV6Z basis\cite{Heccpv6z} added to make the hybrid basis, we find $1.25020$ for DTA and $1.25037$ for GGT, versus $1.25$ exact.  We have not compared DTA with the other more complicated approximations in~\cite{qiu_hybrid_2021}.

A variety of corrections were explored in~\cite{qiu_hybrid_2021}, to make the hybrid bases more accurate. Most of these are less important for nested gausslets because of the better core resolution, so we do not include them in our results. The exception is the double-occupancy electron-electron cusp correction, which is relevant for full-CI calculations.
This is a correction to the energy added at the end of a calculation, given by
\begin{equation}
    \Delta E = e_0 \sum_i d_i^\alpha,
\end{equation}
where $d_i = \langle n_{i\uparrow} n_{i\downarrow} \rangle$ is the double occupancy of gausslet $i$.  The parameters were fitted to a number of gausslet two-electron systems and found to be fairly universal, with 
\begin{equation}
e_0 = -0.005078
\label{Eqn:cusp_e0}
\end{equation}
and
\begin{equation}
\alpha = 0.79.
\label{Eqn:cusp_alpha}
\end{equation}
In previous high-accuracy calculations, the cusp correction could reduce errors by about an order of magnitude. We utilize these parameters in our full-CI tests.

\section{Results}
In this section we show results from
hybrid gausslet/GTO bases, where the gausslet is a PGDG. 

\begin{table}[!htb] 
\centering
\begin{tabular}{|c||c|c|c|l|l|l|c|}
\hline 
Basis& $n_s$ &d&$R_b$& E & V & $N_b$\\ 
\hline
DZ &G&&& -0.49928               &              &  5 \\ 
DZ &5&0.4&4& -0.49947               &0.6265    & 326  \\
DZ &5&0.4&6& -0.49953               &0.62470    & 424  \\
TZ &G&&& -0.49981               &              &  14 \\ 
QZ &G&&& -0.499946              &              &  30 \\
QZ &7&0.2&6& -0.499968              &0.624988    & 1446  \\ 
QZ &7&0.2&8& -0.499969              &0.624960    & 1664  \\ 
5Z &G&&& -0.4999945             &              &  55 \\
5Z &9&0.1&6& -0.4999968             &0.625032    & 3834  \\
5Z &9&0.1&8& -0.4999972             &0.6249989    & 4220  \\
6Z &G&&& -0.49999924            &              &  91 \\
6Z &9&0.1&8& -0.49999950            &0.6250011    & 4224  \\
 \hline
\end{tabular}
\caption{Energies from a hybrid PGDG-gausslet/GTO basis for a single hydrogen atom. Also shown are the total energies for a pure Gaussian basis (with the label G in the $n_s$ column). Here the GTO bases are from the cc-pVXZ family\cite{Hccpv6z}, $d$ is the minimum gausslet spacing, and $R_b$ denotes the size of the box containing the gausslets ($2R_b\times2R_b\times2R_b$). $N_b$ is the total basis size. The exact total energy $E$ is -1/2, and the exact interaction energy $V$ between two electrons placed in the exact 1S orbital is -0.625. }       
\label{tab:Henergies}
\end{table}

In Table \ref{tab:Henergies}, we show results for the energy of the hydrogen atom. The table compares pure GTO results with results for hybrid NG/GTO bases of similar accuracy, where the number of gausslets primarily affects the two particle (V) term through the diagonal approximation. Of course, in this case the GTO basis is vastly more efficient, but this very simple example gives an indication of the accuracy of the diagonal approximation which is similar in more complicated systems. Note that the measured interaction energy $V$ is not a true property of a hydrogen atom: it assumes that two electrons exist in the noninteracting single-particle 1S orbital. The diagonal approximation is not variational, and $V$ may be above or below the exact result, but its error decreases very rapidly with $n_s$. In this case we see that chemical accuracy in $V$ is easily obtained with only $n_s=5$, and $\mu H$ accuracy is obtained with $n_s=9$. This high accuracy is obtained from the diagonal approximation despite the singular nature of the Coulomb interaction at short distances.

The size of the gausslet basis is controlled through $n_s$, the minimum spacing $d$, and the box size $R_b$. Note that despite the small size of the pure Gaussian bases, the approximate storage for the Hamiltonian is comparable to that of the hybrid bases, since storage varies as $N_b^4$ in the Gaussian case, versus $N_b^2$ for gausslets (assuming factorized forms are not used). For molecules, the Gaussian basis size grows linearly with the number of atoms, but for gausslets, the growth in basis size is significantly slower, since overlapping regions of atoms share the same gausslets.  However, the most important consideration for systems beyond the hydrogen atom is the slow convergence of the correlation energy with the size of the basis. Indeed, roughly speaking, the basis error in the correlation energy varies as $1/N_b$ for both Gaussians and gausslets, in the high accuracy limit where the electron-electron cusps are being resolved. The larger number of gausslets required to make the diagonal approximation accurate helps to converge the correlation energy more quickly.

\begin{figure}[t]
\includegraphics[width=0.8\columnwidth]{./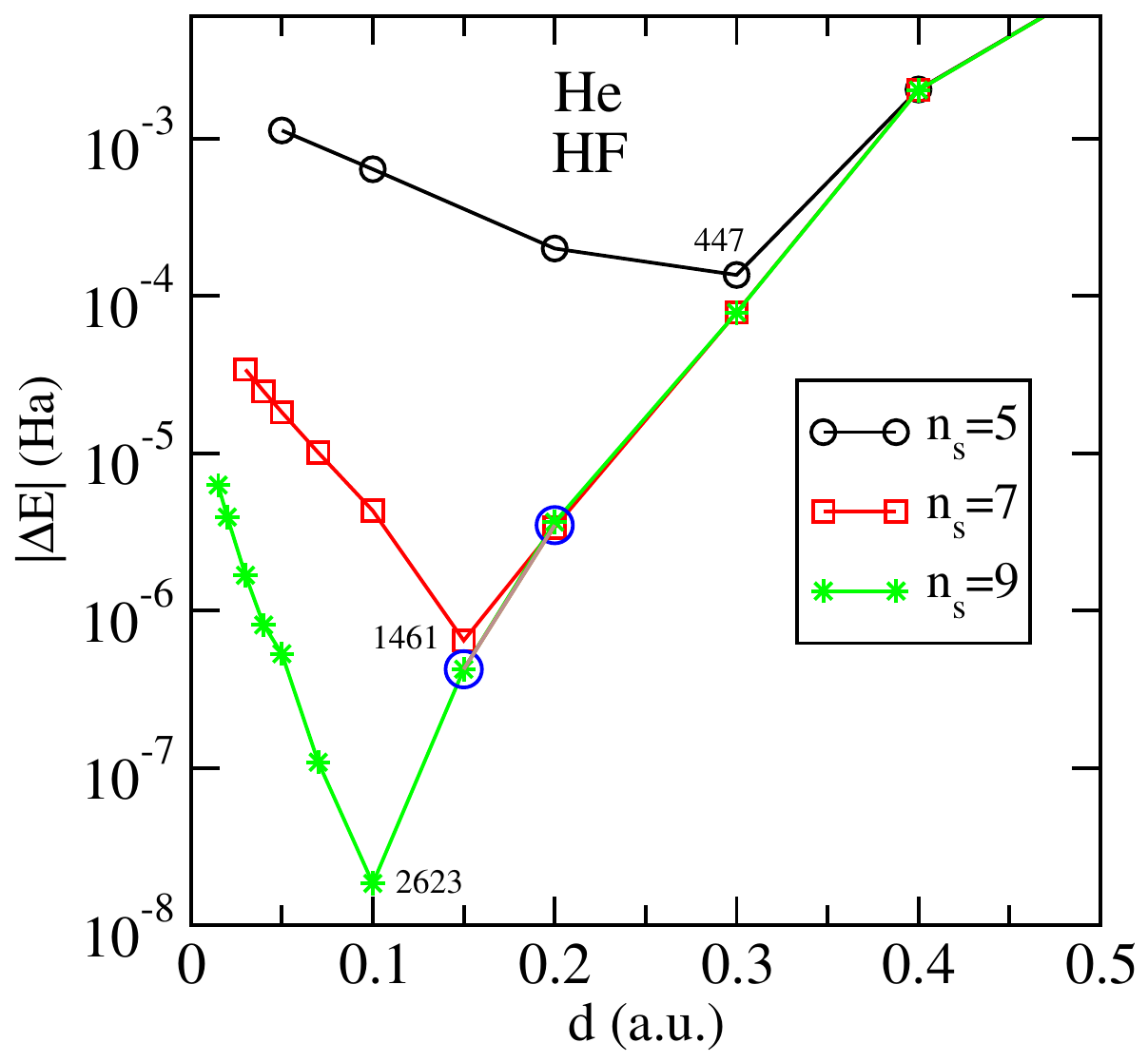}
    \caption{
Hartree-Fock results for a helium atom in a hybrid PGDG-gausslet/GTO basis, versus the minimum gausslet spacing $d$ (in a.u.), for box size $R_b=7$ bohr.  The basis used is the hydrogenic AHGBS-9 basis which only has S Gaussians and is highly accurate for the nuclear cusp. The energies are measured relative to the highly accurate result of Cinal~\cite{cinal_highly_2020}, -2.8616799956122....  The points enclosed in the two blue circles have negative energy errors, which is possible because the diagonal approximation for the interactions is not variational. The number near the minimum of each curve indicates the number of basis 
functions at that minimum.
\label{fig:Heuhf}}
\end{figure}

We now consider results for the Hartree-Fock approximation, since this tests both the diagonal approximation and the completeness of the hybrid basis. Moreover, computations at the Hartree-Fock level allows us to test larger-$Z$ atoms without too much additional method development compared to many-electron correlated calculations. The previous hybrid gausslet study stopped at $Z=2$, and we have found that going beyond $Z=4$ with a coordinate-product coordinate transformation is difficult. Here we show results up to Ne, with all calculations performed on a desktop, and going beyond Ne would not be particularly difficult.

In Fig. \ref{fig:Heuhf} we show Hartree-Fock results for the helium atom in a hybrid basis.
The basis was chosen for its highly accurate nuclear cusp treatment, but it also has excellent completeness for any S function, including the HF orbital. The key approximation in these calculations is the diagonal representation of the interactions, which becomes extremely accurate for larger values of $n_s$; sub-micro-Hartree errors are easily reached with $n_s=9$.

In Fig. \ref{fig:Cuhf} we show unrestricted Hartree-Fock results for the carbon atom in a hybrid basis. In this case we used two different Gaussian bases to try to assess any errors due to inadequate treatment of the nuclear cusp by the Gaussian basis, but the differences in the results were only slightly noticeable, and only at $n_s=7$. Our final energy is accurate to about $10^{-5}$ Hartree.

\begin{figure}[t]
\includegraphics[width=0.8\columnwidth]{./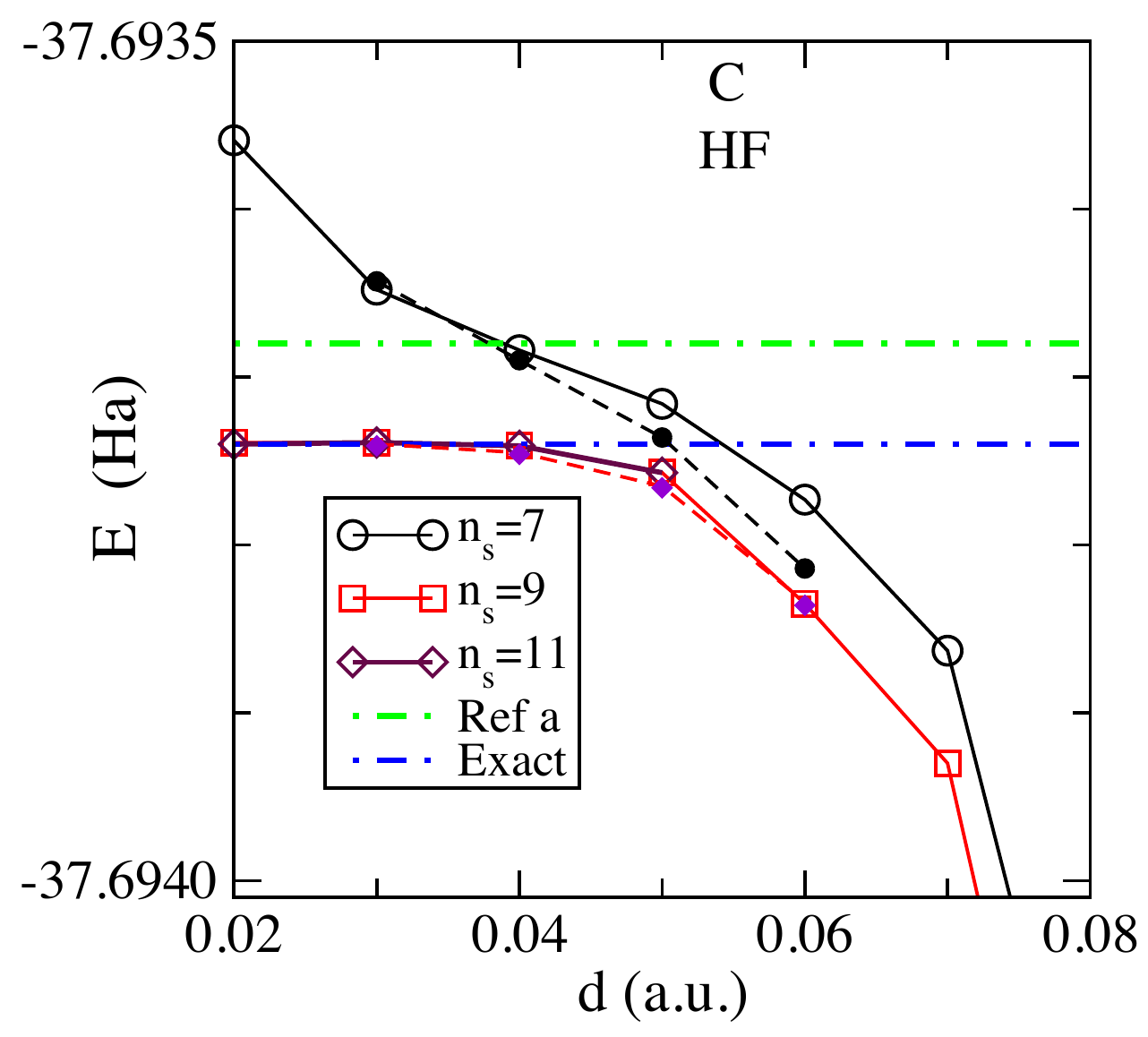}
    \caption{
Unrestricted Hartree-Fock results for a carbon atom in a hybrid PGDG-gausslet/GTO basis versus the minimum gausslet spacing $d$ (in a.u.) for box size $R_b=16$ a.u.  The open symbols use the S and P functions of a cc-pV6Z  basis\cite{Cccpv6z}, while the corresponding small closed symbols use a AHGBS-9 basis\cite{AHGBS9}, which only has S and P functions. The green dot-dashed line (Ref. a) shows previous results from~\cite{cook_unrestricted_1981}, and the black dot-dashed line is a numerically exact result from MRchem~\cite{mrchem}: $-37.6937404 $ Ha. Our final total energy for $n_s=11$ of $-37.69374(1) $ Ha is consistent with the exact result. The number of basis functions for $n_s=7, d=0.05$ was 2623; for $n_s=9, d=0.04$, 4688; for $n_s=11, d=0.03$, 8650.
\label{fig:Cuhf}}
\end{figure}

In Fig. \ref{fig:Neuhf} we show Hartree-Fock results for a neon atom, far beyond what would have been possible with previous gausslet bases. In this case we compare with the highly accurate numerical results of Cinal~\cite{cinal_highly_2020}. Here, one needs $n_s=9$ at least to reach sub-milli-Hartree total energies. At $n_s=11$, we find an energy of $-128.54708(1)$ Ha for $d\sim 0.007-0.01$. This is off by about $2\times10^{-5}$ Ha from Cinal's result of $-128.54709810938$. 
\begin{figure}[t]
\includegraphics[width=0.7\columnwidth]{./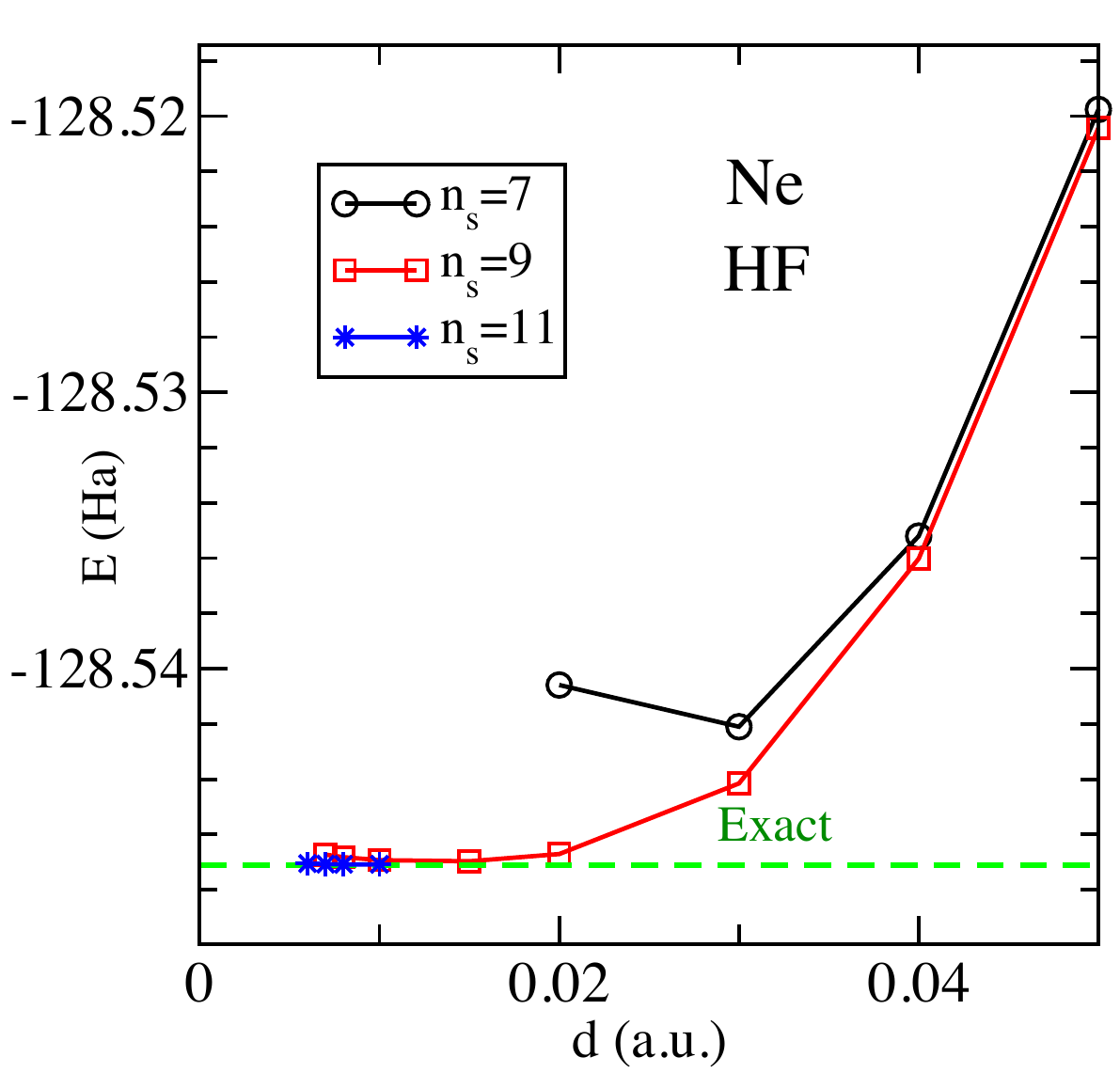}

    \caption{
Restricted Hartree-Fock total energies for a Neon atom  with a hybrid PGDG-gausslet/GTO basis versus the minimum gausslet spacing $d$ (in a.u.) for box size $R_b=15$ a.u.  The GTO utilizes the S  functions of a cc-pV6Z  basis\cite{Cccpv6z} The dashed line shows the numerical results of Cinal~\cite{cinal_highly_2020}. The number of basis functions for $n_s=7, d=0.03$ was 2748; for $n_s=9, d=0.02$, 5754; for $n_s=11, d=0.03$, 12776.
\label{fig:Neuhf}}
\end{figure}

The question of whether the beryllium dimer has a bound state was a long-standing challenge for quantum chemistry. Results from DMRG with transcorrelated basis sets have given convincing results that there is a bound state~\cite{sharma_spectroscopic_2014}, in disagreement with HF results exhibiting a purely repulsive potential~\cite{roeggen_interatomic_1996}.
For the Be atom, Ivanov found a lower-symmetry lower-energy UHF state~\cite{ivanov_hartree-fock_1998}, with energy $E=-14.57336$  Ha, versus the RHF result of $E=-14.57302$ Ha. Our tests on the Be atom verify these results: for both $n_s=9$ and $n_s=11$ we find $E=-14.573351$ Ha for the UHF energy, with a hybrid PGDG-gausslet/cc-pV6Z basis\cite{Beccpv6z}, using S and P GTOs.  For our UHF calculations we specify  that the number of $\alpha$ and $\beta$ electrons are identical ($S^z=0)$; otherwise, the orbitals are unconstrained, and the system does not have to have a definite state of total spin $S$.
In testing our NG bases on Be$_2$, we discovered that there is a UHF solution that is substantially lower than the RHF solution, which provides for a bound state of the dimer. For example, at separation $R=4$, we find  a binding energy of $0.01070$ Ha for $n_s=9$, $d=0.05$, and $0.01073$ Ha for $n_s=7$, $d=0.075$, using UHF energies for both the molecule and atom. This lower energy UHF solution was subsequently verified through a stability analysis using the PYSCF package~\cite{pyscf} and a large traditional GTO basis~\cite{Sharma_private}. In Fig. \ref{fig:Be2} we show the results for the potential energy of the system as a function of $R$, for both RHF and UHF.  (In the RHF calculations, we used the atomic RHF result.)  The resulting potential energy 
curve seems to be qualitatively similar to the exact binding curve, but the exact binding energy has a minimum near $R=4.75$ a.u. instead of $R=4.25$ a.u., and its minimum is only about $4.3$ mH\cite{sharma_spectroscopic_2014} versus $12.2$ mH from UHF.

\begin{figure}[t]
\includegraphics[width=0.8\columnwidth]{./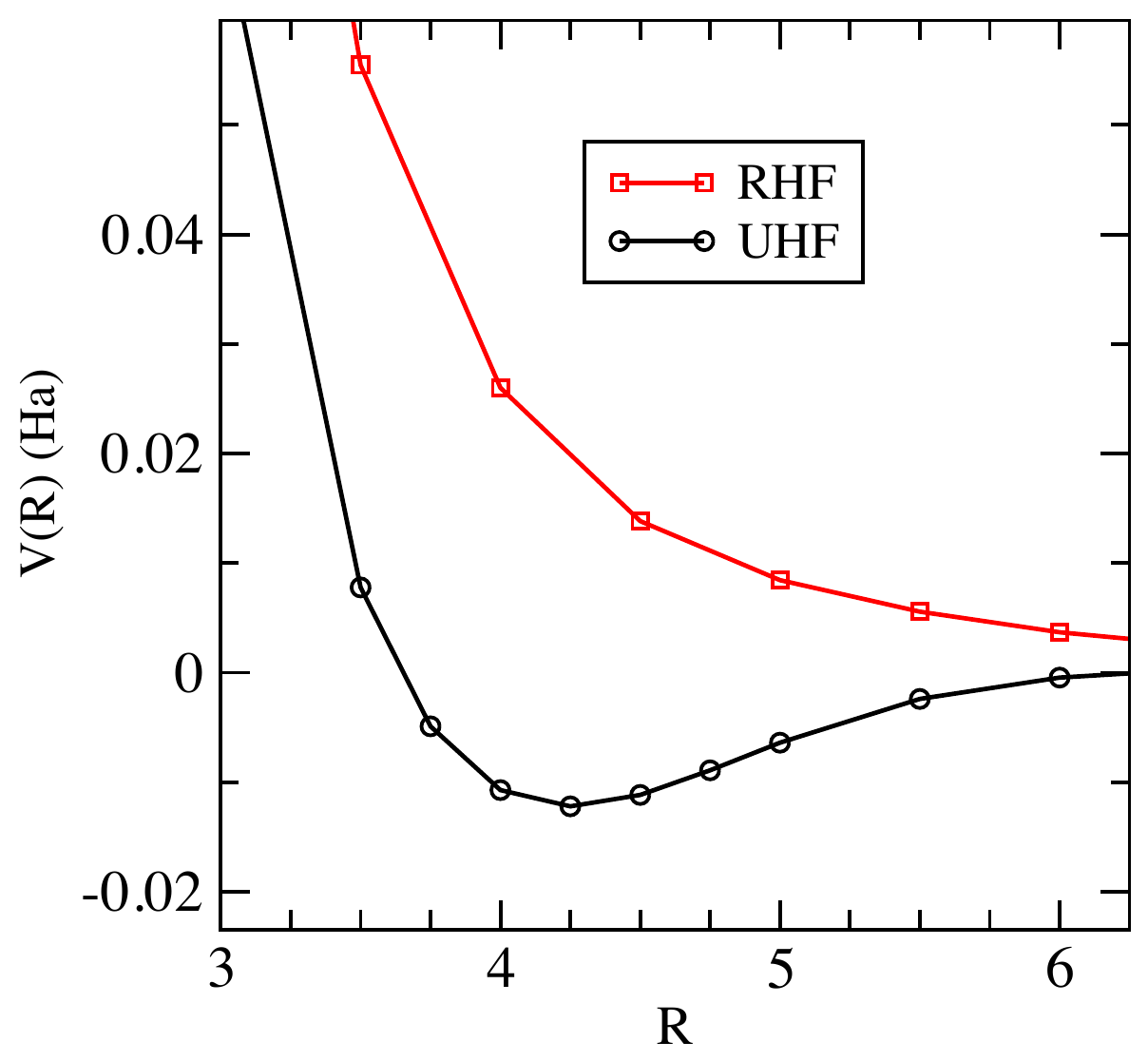}
    \caption{
Potential energy for RHF and UHF for the beryllium dimer within NG bases at separation $R$.  The NG basis was a hybrid PGDG-gausslet/GTO basis utilizing the S and P functions of a cc-pV6Z  basis\cite{Beccpv6z}, with $n_s=7$ and $d=0.075$.
\label{fig:Be2}}
\end{figure}

\begin{figure}[t]
\includegraphics[width=0.8\columnwidth]{./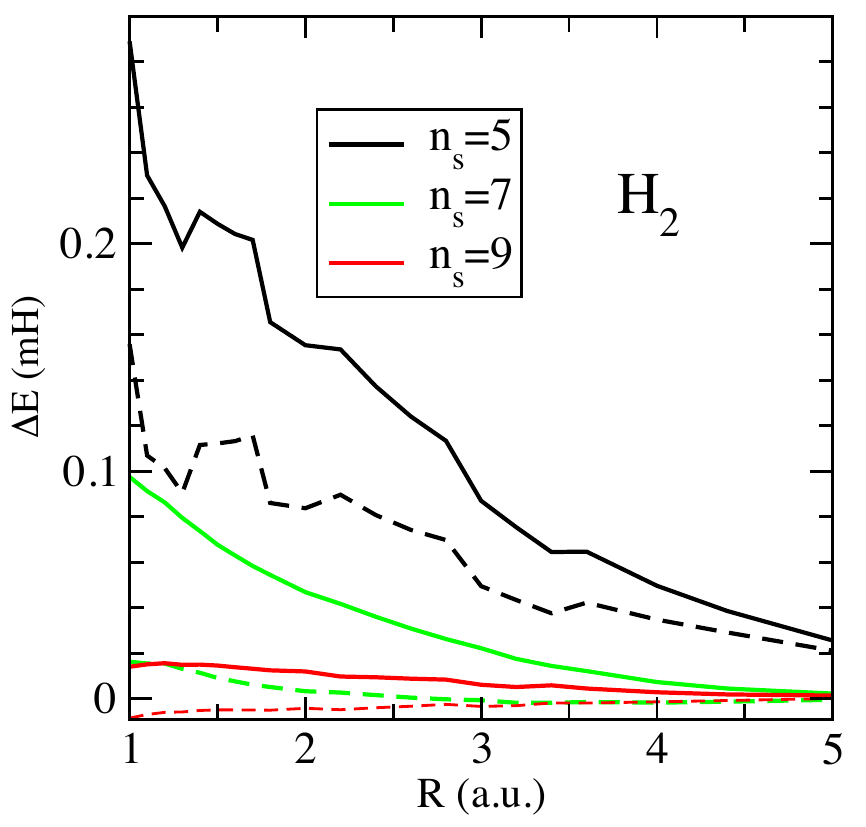}
    \caption{
Exact diagonalization within NG bases for the H$_2$ molecule at separation $R$, relative to the exact results of Ref.~\cite{kolos_new_1986}.  The NG basis was a hybrid PGDG-gausslet/GTO basis utilizing the S and P functions of a cc-pV6Z  basis\cite{Hccpv6z} For $n_s=5$, $7$, and $9$, we used $d =0.4$, $0.2$, and $0.1$, respectively. The solid lines are colored according to $n_s$ as in the legend; the dashed lines of the same color are the results corrected by the double-occupancy cusp correction.
\label{fig:H2pot}}
\end{figure}

Turning now to fully correlated calculations, in Fig. \ref{fig:H2pot} we show full-CI  complete basis results for the hydrogen molecule as a function of nuclear separation. Even at $n_s=5$, the errors are well below chemical accuracy. At large separation, errors are near the $\mu H$ level. Here the two electrons are usually on different atoms so the basis set errors associated with the electron-electron cusp are less important.  The errors are larger at the shortest distances, but still of order $10^{-5}$Ha at $R=1$ for $n_s=9$.  We also show energies corrected by the double occupancy electron-electron cusp correction.  Here, the cusp correction is not as effective as it was for the non-nested hybrid gausslet bases of~\cite{qiu_hybrid_2021}, although the cusp correction does always decrease the errors. It is possible that the effectiveness of the correction is reduced because the fitting of the coefficients for small molecules is sensitive to the difference between non-nested and nested gausslets.

\section{Conclusions}
In this paper we have described nested gausslet (NG) bases, including  pure Gaussian distorted gausslet (PGDG) bases, as well as hybrid versions of these bases with ordinary Gaussians. All gausslet bases are localized and orthonormal from the start.  Previous gausslet bases were not able to effectively treat atoms with atomic number $Z$ greater than 3 or 4; NG bases are able to treat much larger $Z$. NG bases are in general more efficient than previous gausslet bases, since there are no regions away from nuclei where the spacing between functions is inappropriately small. 

The PGDG basis allows for analytic computation of all the integrals needed to construct the Hamiltonian, once one has decomposed the Coulomb interaction into a modest sum of Gaussians. The computation time to construct the Hamiltonian matrices is quite modest, and the diagonal approximation allows bases with at least 10,000 functions to be used on a desktop. 

In constructing these bases we have developed new insight into one-dimensional diagonal basis sets. We have proved an important theorem relating completeness, orthogonality, zero-moment properties of the functions, and diagonalization of the coordinate operator ($x$) matrix. This mathematical development greatly eases the construction of NG bases.

We have demonstrated NG bases with tests on small systems with a focus on very high accuracy, allowing for clear-cut benchmarking.  One can rather easily achieve very small basis set errors at the Hartree-Fock level, typically around $10^{-5}$ Ha. At the fully correlated level, we expect electron-electron cusp errors to be somewhat larger, but in H$_2$ these errors are still around $10^{-5}$ Ha or less. However, the reduction in correlation errors from a recently developed electron-electron cusp correction are not as impressive as in previous work. 

A motivation for this work has been to improve scaling for DMRG calculations in quantum chemistry. The diagonal form of the electron-electron interaction translates immediately to better scaling in DMRG calculations, and the locality of the basis is also
expected to reduce entanglement, but we leave tests using DMRG (and other correlated methods beyond two electrons) for later work.  We also leave lower-accuracy but larger-system tests for later work.

\begin{acknowledgments}
We thank Yiheng Qiu, Sandeep Sharma, and Garnet Chan for helpful conversations.  This work was funded by NSF through Grant DMR-2110041 (SRW). 
\end{acknowledgments}



\bibliography{NestedGaussletRefs}

\begin{thebibliography}{31}%
\makeatletter
\providecommand \@ifxundefined [1]{%
 \@ifx{#1\undefined}
}%
\providecommand \@ifnum [1]{%
 \ifnum #1\expandafter \@firstoftwo
 \else \expandafter \@secondoftwo
 \fi
}%
\providecommand \@ifx [1]{%
 \ifx #1\expandafter \@firstoftwo
 \else \expandafter \@secondoftwo
 \fi
}%
\providecommand \natexlab [1]{#1}%
\providecommand \enquote  [1]{``#1''}%
\providecommand \bibnamefont  [1]{#1}%
\providecommand \bibfnamefont [1]{#1}%
\providecommand \citenamefont [1]{#1}%
\providecommand \href@noop [0]{\@secondoftwo}%
\providecommand \href [0]{\begingroup \@sanitize@url \@href}%
\providecommand \@href[1]{\@@startlink{#1}\@@href}%
\providecommand \@@href[1]{\endgroup#1\@@endlink}%
\providecommand \@sanitize@url [0]{\catcode `\\12\catcode `\$12\catcode
  `\&12\catcode `\#12\catcode `\^12\catcode `\_12\catcode `\%12\relax}%
\providecommand \@@startlink[1]{}%
\providecommand \@@endlink[0]{}%
\providecommand \url  [0]{\begingroup\@sanitize@url \@url }%
\providecommand \@url [1]{\endgroup\@href {#1}{\urlprefix }}%
\providecommand \urlprefix  [0]{URL }%
\providecommand \Eprint [0]{\href }%
\providecommand \doibase [0]{http://dx.doi.org/}%
\providecommand \selectlanguage [0]{\@gobble}%
\providecommand \bibinfo  [0]{\@secondoftwo}%
\providecommand \bibfield  [0]{\@secondoftwo}%
\providecommand \translation [1]{[#1]}%
\providecommand \BibitemOpen [0]{}%
\providecommand \bibitemStop [0]{}%
\providecommand \bibitemNoStop [0]{.\EOS\space}%
\providecommand \EOS [0]{\spacefactor3000\relax}%
\providecommand \BibitemShut  [1]{\csname bibitem#1\endcsname}%
\let\auto@bib@innerbib\@empty
\bibitem [{\citenamefont {Dunlap}(2000)}]{dunlap_robust_2000}%
  \BibitemOpen
  \bibfield  {author} {\bibinfo {author} {\bibfnamefont {B.~I.}\ \bibnamefont
  {Dunlap}},\ }\href {\doibase 10.1039/B000027M} {\bibfield  {journal}
  {\bibinfo  {journal} {Physical Chemistry Chemical Physics}\ }\textbf
  {\bibinfo {volume} {2}},\ \bibinfo {pages} {2113} (\bibinfo {year} {2000})},\
  \bibinfo {note} {publisher: The Royal Society of Chemistry}\BibitemShut
  {NoStop}%
\bibitem [{\citenamefont {Hohenstein}\ \emph {et~al.}(2012)\citenamefont
  {Hohenstein}, \citenamefont {Parrish},\ and\ \citenamefont
  {Mart{\'\i}nez}}]{HohensteinEtAl2012_I}%
  \BibitemOpen
  \bibfield  {author} {\bibinfo {author} {\bibfnamefont {E.~G.}\ \bibnamefont
  {Hohenstein}}, \bibinfo {author} {\bibfnamefont {R.~M.}\ \bibnamefont
  {Parrish}}, \ and\ \bibinfo {author} {\bibfnamefont {T.~J.}\ \bibnamefont
  {Mart{\'\i}nez}},\ }\href@noop {} {\bibfield  {journal} {\bibinfo  {journal}
  {The Journal of Chemical Physics}\ }\textbf {\bibinfo {volume} {137}},\
  \bibinfo {pages} {044103} (\bibinfo {year} {2012})}\BibitemShut {NoStop}%
\bibitem [{\citenamefont {Parrish}\ \emph {et~al.}(2012)\citenamefont
  {Parrish}, \citenamefont {Hohenstein}, \citenamefont {Mart{\'\i}nez},\ and\
  \citenamefont {Sherrill}}]{HohensteinEtAl2012_II}%
  \BibitemOpen
  \bibfield  {author} {\bibinfo {author} {\bibfnamefont {R.~M.}\ \bibnamefont
  {Parrish}}, \bibinfo {author} {\bibfnamefont {E.~G.}\ \bibnamefont
  {Hohenstein}}, \bibinfo {author} {\bibfnamefont {T.~J.}\ \bibnamefont
  {Mart{\'\i}nez}}, \ and\ \bibinfo {author} {\bibfnamefont {C.~D.}\
  \bibnamefont {Sherrill}},\ }\href@noop {} {\bibfield  {journal} {\bibinfo
  {journal} {The Journal of Chemical Physics}\ }\textbf {\bibinfo {volume}
  {137}},\ \bibinfo {pages} {224106} (\bibinfo {year} {2012})}\BibitemShut
  {NoStop}%
\bibitem [{\citenamefont {Parrish}\ \emph {et~al.}(2013)\citenamefont
  {Parrish}, \citenamefont {Hohenstein}, \citenamefont {Schunck}, \citenamefont
  {Sherrill},\ and\ \citenamefont {Mart\'{\i}nez}}]{ParrishEtAl2013}%
  \BibitemOpen
  \bibfield  {author} {\bibinfo {author} {\bibfnamefont {R.~M.}\ \bibnamefont
  {Parrish}}, \bibinfo {author} {\bibfnamefont {E.~G.}\ \bibnamefont
  {Hohenstein}}, \bibinfo {author} {\bibfnamefont {N.~F.}\ \bibnamefont
  {Schunck}}, \bibinfo {author} {\bibfnamefont {C.~D.}\ \bibnamefont
  {Sherrill}}, \ and\ \bibinfo {author} {\bibfnamefont {T.~J.}\ \bibnamefont
  {Mart\'{\i}nez}},\ }\href {\doibase 10.1103/PhysRevLett.111.132505}
  {\bibfield  {journal} {\bibinfo  {journal} {Phys. Rev. Lett.}\ }\textbf
  {\bibinfo {volume} {111}},\ \bibinfo {pages} {132505} (\bibinfo {year}
  {2013})}\BibitemShut {NoStop}%
\bibitem [{\citenamefont {Lu}\ and\ \citenamefont {Ying}(2015)}]{LuYing2015}%
  \BibitemOpen
  \bibfield  {author} {\bibinfo {author} {\bibfnamefont {J.}~\bibnamefont
  {Lu}}\ and\ \bibinfo {author} {\bibfnamefont {L.}~\bibnamefont {Ying}},\
  }\href@noop {} {\bibfield  {journal} {\bibinfo  {journal} {Journal of
  Computational Physics}\ }\textbf {\bibinfo {volume} {302}},\ \bibinfo {pages}
  {329} (\bibinfo {year} {2015})}\BibitemShut {NoStop}%
\bibitem [{\citenamefont {Aquilante}\ \emph {et~al.}(2007)\citenamefont
  {Aquilante}, \citenamefont {Pedersen},\ and\ \citenamefont
  {Lindh}}]{aquilante_low-cost_2007}%
  \BibitemOpen
  \bibfield  {author} {\bibinfo {author} {\bibfnamefont {F.}~\bibnamefont
  {Aquilante}}, \bibinfo {author} {\bibfnamefont {T.~B.}\ \bibnamefont
  {Pedersen}}, \ and\ \bibinfo {author} {\bibfnamefont {R.}~\bibnamefont
  {Lindh}},\ }\href {\doibase 10.1063/1.2736701} {\bibfield  {journal}
  {\bibinfo  {journal} {The Journal of Chemical Physics}\ }\textbf {\bibinfo
  {volume} {126}},\ \bibinfo {pages} {194106} (\bibinfo {year}
  {2007})}\BibitemShut {NoStop}%
\bibitem [{\citenamefont {Gygi}\ and\ \citenamefont
  {Galli}(1995)}]{gygi_real-space_1995}%
  \BibitemOpen
  \bibfield  {author} {\bibinfo {author} {\bibfnamefont {F.}~\bibnamefont
  {Gygi}}\ and\ \bibinfo {author} {\bibfnamefont {G.}~\bibnamefont {Galli}},\
  }\href {\doibase 10.1103/PhysRevB.52.R2229} {\bibfield  {journal} {\bibinfo
  {journal} {Physical Review B}\ }\textbf {\bibinfo {volume} {52}},\ \bibinfo
  {pages} {R2229} (\bibinfo {year} {1995})},\ \bibinfo {note} {publisher:
  American Physical Society}\BibitemShut {NoStop}%
\bibitem [{\citenamefont {Jones}\ \emph {et~al.}(2016)\citenamefont {Jones},
  \citenamefont {Rouet}, \citenamefont {Lawler}, \citenamefont {Vecharynski},
  \citenamefont {Ibrahim}, \citenamefont {Williams}, \citenamefont {Abeln},
  \citenamefont {Yang}, \citenamefont {McCurdy}, \citenamefont {Haxton},
  \citenamefont {Li},\ and\ \citenamefont {Rescigno}}]{jones_efficient_2016}%
  \BibitemOpen
  \bibfield  {author} {\bibinfo {author} {\bibfnamefont {J.~R.}\ \bibnamefont
  {Jones}}, \bibinfo {author} {\bibfnamefont {F.-H.}\ \bibnamefont {Rouet}},
  \bibinfo {author} {\bibfnamefont {K.~V.}\ \bibnamefont {Lawler}}, \bibinfo
  {author} {\bibfnamefont {E.}~\bibnamefont {Vecharynski}}, \bibinfo {author}
  {\bibfnamefont {K.~Z.}\ \bibnamefont {Ibrahim}}, \bibinfo {author}
  {\bibfnamefont {S.}~\bibnamefont {Williams}}, \bibinfo {author}
  {\bibfnamefont {B.}~\bibnamefont {Abeln}}, \bibinfo {author} {\bibfnamefont
  {C.}~\bibnamefont {Yang}}, \bibinfo {author} {\bibfnamefont {W.}~\bibnamefont
  {McCurdy}}, \bibinfo {author} {\bibfnamefont {D.~J.}\ \bibnamefont {Haxton}},
  \bibinfo {author} {\bibfnamefont {X.~S.}\ \bibnamefont {Li}}, \ and\ \bibinfo
  {author} {\bibfnamefont {T.~N.}\ \bibnamefont {Rescigno}},\ }\href@noop {}
  {\bibfield  {journal} {\bibinfo  {journal} {Molecular Physics}\ }\textbf
  {\bibinfo {volume} {114}} (\bibinfo {year} {2016})}\BibitemShut {NoStop}%
\bibitem [{\citenamefont {White}(2017)}]{white_hybrid_2017}%
  \BibitemOpen
  \bibfield  {author} {\bibinfo {author} {\bibfnamefont {S.~R.}\ \bibnamefont
  {White}},\ }\href {\doibase 10.1063/1.5007066} {\bibfield  {journal}
  {\bibinfo  {journal} {The Journal of Chemical Physics}\ }\textbf {\bibinfo
  {volume} {147}},\ \bibinfo {pages} {244102} (\bibinfo {year}
  {2017})}\BibitemShut {NoStop}%
\bibitem [{\citenamefont {White}\ and\ \citenamefont
  {Stoudenmire}(2019)}]{white_multisliced_2019}%
  \BibitemOpen
  \bibfield  {author} {\bibinfo {author} {\bibfnamefont {S.~R.}\ \bibnamefont
  {White}}\ and\ \bibinfo {author} {\bibfnamefont {E.~M.}\ \bibnamefont
  {Stoudenmire}},\ }\href {\doibase 10.1103/PhysRevB.99.081110} {\bibfield
  {journal} {\bibinfo  {journal} {Physical Review B}\ }\textbf {\bibinfo
  {volume} {99}},\ \bibinfo {pages} {081110} (\bibinfo {year}
  {2019})}\BibitemShut {NoStop}%
\bibitem [{\citenamefont {Qiu}\ and\ \citenamefont
  {White}(2021)}]{qiu_hybrid_2021}%
  \BibitemOpen
  \bibfield  {author} {\bibinfo {author} {\bibfnamefont {Y.}~\bibnamefont
  {Qiu}}\ and\ \bibinfo {author} {\bibfnamefont {S.~R.}\ \bibnamefont
  {White}},\ }\href {\doibase 10.1063/5.0068887} {\bibfield  {journal}
  {\bibinfo  {journal} {The Journal of Chemical Physics}\ }\textbf {\bibinfo
  {volume} {155}},\ \bibinfo {pages} {184107} (\bibinfo {year}
  {2021})}\BibitemShut {NoStop}%
\bibitem [{\citenamefont {Light}\ and\ \citenamefont
  {Carrington~Jr.}(2000)}]{light_discrete-variable_2000}%
  \BibitemOpen
  \bibfield  {author} {\bibinfo {author} {\bibfnamefont {J.~C.}\ \bibnamefont
  {Light}}\ and\ \bibinfo {author} {\bibfnamefont {T.}~\bibnamefont
  {Carrington~Jr.}},\ }in\ \href {\doibase
  https://doi.org/10.1002/9780470141731.ch4} {\emph {\bibinfo {booktitle}
  {Advances in {Chemical} {Physics}}}}\ (\bibinfo  {publisher} {John Wiley \&
  Sons, Ltd},\ \bibinfo {year} {2000})\ pp.\ \bibinfo {pages}
  {263--310}\BibitemShut {NoStop}%
\bibitem [{\citenamefont {Evenbly}\ and\ \citenamefont
  {White}(2018)}]{evenbly_representation_2018}%
  \BibitemOpen
  \bibfield  {author} {\bibinfo {author} {\bibfnamefont {G.}~\bibnamefont
  {Evenbly}}\ and\ \bibinfo {author} {\bibfnamefont {S.~R.}\ \bibnamefont
  {White}},\ }\href {\doibase 10.1103/PhysRevA.97.052314} {\bibfield  {journal}
  {\bibinfo  {journal} {Physical Review A}\ }\textbf {\bibinfo {volume} {97}},\
  \bibinfo {pages} {052314} (\bibinfo {year} {2018})}\BibitemShut {NoStop}%
\bibitem [{\citenamefont {Scott}\ \emph {et~al.}(2006)\citenamefont {Scott},
  \citenamefont {Aubert-Frecon},\ and\ \citenamefont
  {Grotendorst}}]{scott_new_2006}%
  \BibitemOpen
  \bibfield  {author} {\bibinfo {author} {\bibfnamefont {T.~C.}\ \bibnamefont
  {Scott}}, \bibinfo {author} {\bibfnamefont {M.}~\bibnamefont
  {Aubert-Frecon}}, \ and\ \bibinfo {author} {\bibfnamefont {J.}~\bibnamefont
  {Grotendorst}},\ }\href {\doibase 10.1016/j.chemphys.2005.10.031} {\bibfield
  {journal} {\bibinfo  {journal} {Chemical Physics}\ }\textbf {\bibinfo
  {volume} {324}},\ \bibinfo {pages} {323} (\bibinfo {year}
  {2006})}\BibitemShut {NoStop}%
\bibitem [{\citenamefont {Woon}\ and\ \citenamefont {Dunning}()}]{Heccpv6z}%
  \BibitemOpen
  \bibfield  {author} {\bibinfo {author} {\bibfnamefont {D.~E.}\ \bibnamefont
  {Woon}}\ and\ \bibinfo {author} {\bibfnamefont {T.~H.}\ \bibnamefont
  {Dunning}, \bibfnamefont {Jr}},\ }\href@noop {} {\enquote {\bibinfo {title}
  {unpublished},}\ }\bibinfo {note} {As referenced in 'van Mourik et al, Mol
  Phys, 96, 529-547 (1999)' (as reference 48)}\BibitemShut {NoStop}%
\bibitem [{\citenamefont {Peterson}\ \emph {et~al.}(1994)\citenamefont
  {Peterson}, \citenamefont {Woon},\ and\ \citenamefont {Dunning}}]{Hccpv6z}%
  \BibitemOpen
  \bibfield  {author} {\bibinfo {author} {\bibfnamefont {K.~A.}\ \bibnamefont
  {Peterson}}, \bibinfo {author} {\bibfnamefont {D.~E.}\ \bibnamefont {Woon}},
  \ and\ \bibinfo {author} {\bibfnamefont {T.~H.}\ \bibnamefont {Dunning},
  \bibfnamefont {Jr}},\ }\href {\doibase 10.1063/1.466884} {\bibfield
  {journal} {\bibinfo  {journal} {J. Chem. Phys.}\ }\textbf {\bibinfo {volume}
  {100}},\ \bibinfo {pages} {7410} (\bibinfo {year} {1994})}\BibitemShut
  {NoStop}%
\bibitem [{\citenamefont {Cinal}(2020)}]{cinal_highly_2020}%
  \BibitemOpen
  \bibfield  {author} {\bibinfo {author} {\bibfnamefont {M.}~\bibnamefont
  {Cinal}},\ }\href {\doibase 10.1007/s10910-020-01144-z} {\bibfield  {journal}
  {\bibinfo  {journal} {Journal of Mathematical Chemistry}\ }\textbf {\bibinfo
  {volume} {58}},\ \bibinfo {pages} {1571} (\bibinfo {year}
  {2020})}\BibitemShut {NoStop}%
\bibitem [{\citenamefont {Wilson}\ \emph {et~al.}(1996)\citenamefont {Wilson},
  \citenamefont {van Mourik},\ and\ \citenamefont {Dunning}}]{Cccpv6z}%
  \BibitemOpen
  \bibfield  {author} {\bibinfo {author} {\bibfnamefont {A.~K.}\ \bibnamefont
  {Wilson}}, \bibinfo {author} {\bibfnamefont {T.}~\bibnamefont {van Mourik}},
  \ and\ \bibinfo {author} {\bibfnamefont {T.~H.}\ \bibnamefont {Dunning}},\
  }\href {\doibase 10.1016/s0166-1280(96)80048-0} {\bibfield  {journal}
  {\bibinfo  {journal} {J. Mol. Struc-THEOCHEM}\ }\textbf {\bibinfo {volume}
  {388}},\ \bibinfo {pages} {339} (\bibinfo {year} {1996})}\BibitemShut
  {NoStop}%
\bibitem [{\citenamefont {Lehtola}(2020)}]{AHGBS9}%
  \BibitemOpen
  \bibfield  {author} {\bibinfo {author} {\bibfnamefont {S.}~\bibnamefont
  {Lehtola}},\ }\href {\doibase 10.1063/1.5144964} {\bibfield  {journal}
  {\bibinfo  {journal} {J. Chem. Phys.}\ }\textbf {\bibinfo {volume} {152}},\
  \bibinfo {pages} {134108} (\bibinfo {year} {2020})}\BibitemShut {NoStop}%
\bibitem [{\citenamefont {Cook}(1981)}]{cook_unrestricted_1981}%
  \BibitemOpen
  \bibfield  {author} {\bibinfo {author} {\bibfnamefont {D.~B.}\ \bibnamefont
  {Cook}},\ }\href {\doibase 10.1007/BF00550429} {\bibfield  {journal}
  {\bibinfo  {journal} {Theoretica chimica acta}\ }\textbf {\bibinfo {volume}
  {58}},\ \bibinfo {pages} {155} (\bibinfo {year} {1981})}\BibitemShut
  {NoStop}%
\bibitem [{\citenamefont {Wind}\ \emph {et~al.}(2023)\citenamefont {Wind},
  \citenamefont {Bj{\o}rgve}, \citenamefont {Brakestad}, \citenamefont
  {Gerez~S.}, \citenamefont {Jensen}, \citenamefont {Eik{\aa}s},\ and\
  \citenamefont {Frediani}}]{mrchem}%
  \BibitemOpen
  \bibfield  {author} {\bibinfo {author} {\bibfnamefont {P.}~\bibnamefont
  {Wind}}, \bibinfo {author} {\bibfnamefont {M.}~\bibnamefont {Bj{\o}rgve}},
  \bibinfo {author} {\bibfnamefont {A.}~\bibnamefont {Brakestad}}, \bibinfo
  {author} {\bibfnamefont {G.~A.}\ \bibnamefont {Gerez~S.}}, \bibinfo {author}
  {\bibfnamefont {S.~R.}\ \bibnamefont {Jensen}}, \bibinfo {author}
  {\bibfnamefont {R.~D.~R.}\ \bibnamefont {Eik{\aa}s}}, \ and\ \bibinfo
  {author} {\bibfnamefont {L.}~\bibnamefont {Frediani}},\ }\href@noop {}
  {\bibfield  {journal} {\bibinfo  {journal} {Journal of Chemical Theory and
  Computation}\ }\textbf {\bibinfo {volume} {19}},\ \bibinfo {pages} {137}
  (\bibinfo {year} {2023})}\BibitemShut {NoStop}%
\bibitem [{\citenamefont {Sharma}\ \emph {et~al.}(2014)\citenamefont {Sharma},
  \citenamefont {Yanai}, \citenamefont {Booth}, \citenamefont {Umrigar},\ and\
  \citenamefont {Chan}}]{sharma_spectroscopic_2014}%
  \BibitemOpen
  \bibfield  {author} {\bibinfo {author} {\bibfnamefont {S.}~\bibnamefont
  {Sharma}}, \bibinfo {author} {\bibfnamefont {T.}~\bibnamefont {Yanai}},
  \bibinfo {author} {\bibfnamefont {G.~H.}\ \bibnamefont {Booth}}, \bibinfo
  {author} {\bibfnamefont {C.~J.}\ \bibnamefont {Umrigar}}, \ and\ \bibinfo
  {author} {\bibfnamefont {G.~K.-L.}\ \bibnamefont {Chan}},\ }\href@noop {}
  {\bibfield  {journal} {\bibinfo  {journal} {The Journal of Chemical Physics}\
  }\textbf {\bibinfo {volume} {140}},\ \bibinfo {pages} {104112} (\bibinfo
  {year} {2014})}\BibitemShut {NoStop}%
\bibitem [{\citenamefont {Røeggen}\ and\ \citenamefont
  {Almlöf}(1996)}]{roeggen_interatomic_1996}%
  \BibitemOpen
  \bibfield  {author} {\bibinfo {author} {\bibfnamefont {I.}~\bibnamefont
  {Røeggen}}\ and\ \bibinfo {author} {\bibfnamefont {J.}~\bibnamefont
  {Almlöf}},\ }\href {\doibase
  10.1002/(SICI)1097-461X(1996)60:1<453::AID-QUA44>3.0.CO;2-A} {\bibfield
  {journal} {\bibinfo  {journal} {International Journal of Quantum Chemistry}\
  }\textbf {\bibinfo {volume} {60}},\ \bibinfo {pages} {453} (\bibinfo {year}
  {1996})}\BibitemShut {NoStop}%
\bibitem [{\citenamefont {Ivanov}(1998)}]{ivanov_hartree-fock_1998}%
  \BibitemOpen
  \bibfield  {author} {\bibinfo {author} {\bibfnamefont {M.~V.}\ \bibnamefont
  {Ivanov}},\ }\href@noop {} {\bibfield  {journal} {\bibinfo  {journal}
  {Physics Letters A}\ }\textbf {\bibinfo {volume} {239}},\ \bibinfo {pages}
  {72} (\bibinfo {year} {1998})}\BibitemShut {NoStop}%
\bibitem [{\citenamefont {Prascher}\ \emph {et~al.}(2011)\citenamefont
  {Prascher}, \citenamefont {Woon}, \citenamefont {Peterson}, \citenamefont
  {Dunning},\ and\ \citenamefont {Wilson}}]{Beccpv6z}%
  \BibitemOpen
  \bibfield  {author} {\bibinfo {author} {\bibfnamefont {B.~P.}\ \bibnamefont
  {Prascher}}, \bibinfo {author} {\bibfnamefont {D.~E.}\ \bibnamefont {Woon}},
  \bibinfo {author} {\bibfnamefont {K.~A.}\ \bibnamefont {Peterson}}, \bibinfo
  {author} {\bibfnamefont {T.~H.}\ \bibnamefont {Dunning}}, \ and\ \bibinfo
  {author} {\bibfnamefont {A.~K.}\ \bibnamefont {Wilson}},\ }\href {\doibase
  10.1007/s00214-010-0764-0} {\bibfield  {journal} {\bibinfo  {journal} {Theor.
  Chem. Acc.}\ }\textbf {\bibinfo {volume} {128}},\ \bibinfo {pages} {69}
  (\bibinfo {year} {2011})}\BibitemShut {NoStop}%
\bibitem [{\citenamefont {Sun}\ \emph {et~al.}(2020)\citenamefont {Sun},
  \citenamefont {Zhang}, \citenamefont {Banerjee}, \citenamefont {Bao},
  \citenamefont {Barbry}, \citenamefont {Blunt}, \citenamefont {Bogdanov},
  \citenamefont {Booth}, \citenamefont {Chen}, \citenamefont {Cui},
  \citenamefont {Eriksen}, \citenamefont {Gao}, \citenamefont {Guo},
  \citenamefont {Hermann}, \citenamefont {Hermes}, \citenamefont {Koh},
  \citenamefont {Koval}, \citenamefont {Lehtola}, \citenamefont {Li},
  \citenamefont {Liu}, \citenamefont {Mardirossian}, \citenamefont {McClain},
  \citenamefont {Motta}, \citenamefont {Mussard}, \citenamefont {Pham},
  \citenamefont {Pulkin}, \citenamefont {Purwanto}, \citenamefont {Robinson},
  \citenamefont {Ronca}, \citenamefont {Sayfutyarova}, \citenamefont
  {Scheurer}, \citenamefont {Schurkus}, \citenamefont {Smith}, \citenamefont
  {Sun}, \citenamefont {Sun}, \citenamefont {Upadhyay}, \citenamefont {Wagner},
  \citenamefont {Wang}, \citenamefont {White}, \citenamefont {Whitfield},
  \citenamefont {Williamson}, \citenamefont {Wouters}, \citenamefont {Yang},
  \citenamefont {Yu}, \citenamefont {Zhu}, \citenamefont {Berkelbach},
  \citenamefont {Sharma}, \citenamefont {Sokolov},\ and\ \citenamefont
  {Chan}}]{pyscf}%
  \BibitemOpen
  \bibfield  {author} {\bibinfo {author} {\bibfnamefont {Q.}~\bibnamefont
  {Sun}}, \bibinfo {author} {\bibfnamefont {X.}~\bibnamefont {Zhang}}, \bibinfo
  {author} {\bibfnamefont {S.}~\bibnamefont {Banerjee}}, \bibinfo {author}
  {\bibfnamefont {P.}~\bibnamefont {Bao}}, \bibinfo {author} {\bibfnamefont
  {M.}~\bibnamefont {Barbry}}, \bibinfo {author} {\bibfnamefont {N.~S.}\
  \bibnamefont {Blunt}}, \bibinfo {author} {\bibfnamefont {N.~A.}\ \bibnamefont
  {Bogdanov}}, \bibinfo {author} {\bibfnamefont {G.~H.}\ \bibnamefont {Booth}},
  \bibinfo {author} {\bibfnamefont {J.}~\bibnamefont {Chen}}, \bibinfo {author}
  {\bibfnamefont {Z.-H.}\ \bibnamefont {Cui}}, \bibinfo {author} {\bibfnamefont
  {J.~J.}\ \bibnamefont {Eriksen}}, \bibinfo {author} {\bibfnamefont
  {Y.}~\bibnamefont {Gao}}, \bibinfo {author} {\bibfnamefont {S.}~\bibnamefont
  {Guo}}, \bibinfo {author} {\bibfnamefont {J.}~\bibnamefont {Hermann}},
  \bibinfo {author} {\bibfnamefont {M.~R.}\ \bibnamefont {Hermes}}, \bibinfo
  {author} {\bibfnamefont {K.}~\bibnamefont {Koh}}, \bibinfo {author}
  {\bibfnamefont {P.}~\bibnamefont {Koval}}, \bibinfo {author} {\bibfnamefont
  {S.}~\bibnamefont {Lehtola}}, \bibinfo {author} {\bibfnamefont
  {Z.}~\bibnamefont {Li}}, \bibinfo {author} {\bibfnamefont {J.}~\bibnamefont
  {Liu}}, \bibinfo {author} {\bibfnamefont {N.}~\bibnamefont {Mardirossian}},
  \bibinfo {author} {\bibfnamefont {J.~D.}\ \bibnamefont {McClain}}, \bibinfo
  {author} {\bibfnamefont {M.}~\bibnamefont {Motta}}, \bibinfo {author}
  {\bibfnamefont {B.}~\bibnamefont {Mussard}}, \bibinfo {author} {\bibfnamefont
  {H.~Q.}\ \bibnamefont {Pham}}, \bibinfo {author} {\bibfnamefont
  {A.}~\bibnamefont {Pulkin}}, \bibinfo {author} {\bibfnamefont
  {W.}~\bibnamefont {Purwanto}}, \bibinfo {author} {\bibfnamefont {P.~J.}\
  \bibnamefont {Robinson}}, \bibinfo {author} {\bibfnamefont {E.}~\bibnamefont
  {Ronca}}, \bibinfo {author} {\bibfnamefont {E.~R.}\ \bibnamefont
  {Sayfutyarova}}, \bibinfo {author} {\bibfnamefont {M.}~\bibnamefont
  {Scheurer}}, \bibinfo {author} {\bibfnamefont {H.~F.}\ \bibnamefont
  {Schurkus}}, \bibinfo {author} {\bibfnamefont {J.~E.~T.}\ \bibnamefont
  {Smith}}, \bibinfo {author} {\bibfnamefont {C.}~\bibnamefont {Sun}}, \bibinfo
  {author} {\bibfnamefont {S.-N.}\ \bibnamefont {Sun}}, \bibinfo {author}
  {\bibfnamefont {S.}~\bibnamefont {Upadhyay}}, \bibinfo {author}
  {\bibfnamefont {L.~K.}\ \bibnamefont {Wagner}}, \bibinfo {author}
  {\bibfnamefont {X.}~\bibnamefont {Wang}}, \bibinfo {author} {\bibfnamefont
  {A.}~\bibnamefont {White}}, \bibinfo {author} {\bibfnamefont {J.~D.}\
  \bibnamefont {Whitfield}}, \bibinfo {author} {\bibfnamefont {M.~J.}\
  \bibnamefont {Williamson}}, \bibinfo {author} {\bibfnamefont
  {S.}~\bibnamefont {Wouters}}, \bibinfo {author} {\bibfnamefont
  {J.}~\bibnamefont {Yang}}, \bibinfo {author} {\bibfnamefont {J.~M.}\
  \bibnamefont {Yu}}, \bibinfo {author} {\bibfnamefont {T.}~\bibnamefont
  {Zhu}}, \bibinfo {author} {\bibfnamefont {T.~C.}\ \bibnamefont {Berkelbach}},
  \bibinfo {author} {\bibfnamefont {S.}~\bibnamefont {Sharma}}, \bibinfo
  {author} {\bibfnamefont {A.~Y.}\ \bibnamefont {Sokolov}}, \ and\ \bibinfo
  {author} {\bibfnamefont {G.~K.-L.}\ \bibnamefont {Chan}},\ }\href {\doibase
  10.1063/5.0006074} {\bibfield  {journal} {\bibinfo  {journal} {The Journal of
  Chemical Physics}\ }\textbf {\bibinfo {volume} {153}},\ \bibinfo {pages}
  {024109} (\bibinfo {year} {2020})}\BibitemShut {NoStop}%
\bibitem [{\citenamefont {Sharma}()}]{Sharma_private}%
  \BibitemOpen
  \bibfield  {author} {\bibinfo {author} {\bibfnamefont {S.}~\bibnamefont
  {Sharma}},\ }\href@noop {} {}\bibinfo {note} {Private communication. The
  basis was ANO-RCC, P.-O. Widmark, P.-Å. Malmqvist and B. O. Roos, Theor.
  Chim. Acta 77, 291-306 (1990).}\BibitemShut {Stop}%
\bibitem [{\citenamefont {Kol/os}\ \emph {et~al.}(1986)\citenamefont {Kol/os},
  \citenamefont {Szalewicz},\ and\ \citenamefont {Monkhorst}}]{kolos_new_1986}%
  \BibitemOpen
  \bibfield  {author} {\bibinfo {author} {\bibfnamefont {W.}~\bibnamefont
  {Kol/os}}, \bibinfo {author} {\bibfnamefont {K.}~\bibnamefont {Szalewicz}}, \
  and\ \bibinfo {author} {\bibfnamefont {H.~J.}\ \bibnamefont {Monkhorst}},\
  }\href {\doibase 10.1063/1.450258} {\bibfield  {journal} {\bibinfo  {journal}
  {The Journal of Chemical Physics}\ }\textbf {\bibinfo {volume} {84}},\
  \bibinfo {pages} {3278} (\bibinfo {year} {1986})}\BibitemShut {NoStop}%
\bibitem [{\citenamefont {Beylkin}\ and\ \citenamefont
  {Monz{\'o}n}(2005)}]{BeylkinMonzon2005}%
  \BibitemOpen
  \bibfield  {author} {\bibinfo {author} {\bibfnamefont {G.}~\bibnamefont
  {Beylkin}}\ and\ \bibinfo {author} {\bibfnamefont {L.}~\bibnamefont
  {Monz{\'o}n}},\ }\href@noop {} {\bibfield  {journal} {\bibinfo  {journal}
  {Applied and Computational Harmonic Analysis}\ }\textbf {\bibinfo {volume}
  {19}},\ \bibinfo {pages} {17} (\bibinfo {year} {2005})}\BibitemShut {NoStop}%
\bibitem [{\citenamefont {Beylkin}\ and\ \citenamefont
  {Monz{\'o}n}(2010)}]{BeylkinMonzon2010}%
  \BibitemOpen
  \bibfield  {author} {\bibinfo {author} {\bibfnamefont {G.}~\bibnamefont
  {Beylkin}}\ and\ \bibinfo {author} {\bibfnamefont {L.}~\bibnamefont
  {Monz{\'o}n}},\ }\href@noop {} {\bibfield  {journal} {\bibinfo  {journal}
  {Applied and Computational Harmonic Analysis}\ }\textbf {\bibinfo {volume}
  {28}},\ \bibinfo {pages} {131} (\bibinfo {year} {2010})}\BibitemShut
  {NoStop}%
\bibitem [{\citenamefont {Daubechies}(1992)}]{DaubechiesBook}%
  \BibitemOpen
  \bibfield  {author} {\bibinfo {author} {\bibfnamefont {I.}~\bibnamefont
  {Daubechies}},\ }\href@noop {} {\emph {\bibinfo {title} {Ten Lectures on
  Wavelets}}}\ (\bibinfo  {publisher} {Society for Industrial and Applied
  Mathematics},\ \bibinfo {year} {1992})\BibitemShut {NoStop}%
\end{thebibliography}%
\section*{Appendix A: Coordinate Mappings}
With a coordinate transformation, a gausslet $G(x)$ is replaced by $\sqrt{u'(x)} G(u(x))$. The density of basis functions is given by $\rho(x) \equiv u'(x)$; the spacing between basis functions is approximately $1/\rho(x)$. Here we give coordinate mappings for the simple case of one atom centered at $x_0$. 

The sinh mapping is defined by a core size $a$ and scale parameter $s$, and is given by
\begin{eqnarray}
   \rho(x) &=& \left(s \sqrt{(x-x_0)^2+a^2}\right)^{-1}\nonumber\\
   u(x) &=& \sinh^{-1}[(x-x_0)/a]/s \\
   x(u) &=& x_0 + a  \sinh(su) .\nonumber
\end{eqnarray}
The function spacing at the origin is $as$. Typical parameters are $s = 0.7$ and $a=s/Z$, where $Z$ is the atomic number. For this mapping, the basis function spacing increases without bound away from the origin. This fact is usually inconvenient for numerical integration, so typically a constant shift $1/w$ is added to $\rho$, yielding an asymptotic spacing of $w$. Our standard choice is $w=10$. 

Another useful mapping, which we call the erf/x mapping, is given by
\begin{equation}
   \rho(x)= \frac{n
   \left(\text{erf}\left(\frac{x}{c}
   \right)-\text{erf}\left(\frac{x}{
   d}\right)\right)}{2 x \ln
   \left(\frac{d}{c}\right)}.
\end{equation}
This mapping is similar to the sinh mapping, with a core size controlled by $c$, falling roughly as $\sim 1/x$ for $c<x<d$, but dropping off to zero quickly for $|x|>d$. The total integral of $\rho(x)$ is $n$, so this density contributes $n$ functions when added to another density. 

\section*{Appendix B: Coulomb Integrals}
We use a long-known trick for performing all three dimensional Coulomb integrals, representing $1/r$ as a sum of Gaussians via discretization of the integral
\begin{equation}
    \frac{1}{r}  = \frac{1}{\sqrt{\pi}} \int_{-\infty}^\infty dt\ 
    e^{-r^2 t^2} .
\end{equation}
Then the 3D integrals separate into products of $x$, $y$, and $z$ integrals because $e^{-\zeta r^2} = e^{-\zeta x^2}e^{-\zeta y^2}e^{-\zeta z^2}$.

We can use the sinh mapping of Appendix A to construct a suitable discretization of this integral. (Alternative approaches can be found in~\cite{BeylkinMonzon2005,BeylkinMonzon2010}.) 
Let $u(t)$ be a sinh mapping with $x_0=0$; we find that the parameters $s=0.3$, $a=0.03$ are close to ideal for moderate accuracy calculations. Then change variables $t\to u$, and discretize uniformly in $u$,
using a symmetric grid with the origin at the midpoint of two grid points.  The integral is even, so only need to take into account the positive grid points, with contribution adjusted by a factor of 2. Concretely, we take
 $u_i = (i-1/2)$ for $i=1\ldots M$, where $M=45$, and then define $t_i = t(u_i)$. 
 This choice yields the approximation
\begin{equation}
    \frac{1}{r}  \approx \sum_i c_i e^{-\zeta_i r^2}
\end{equation}
with
\begin{eqnarray}
    \zeta_i &=& t_i^2 \nonumber \\
    c_i &=& \frac{2 \Delta}{\sqrt{\pi} \rho(t_i)}.
\end{eqnarray}
For $r$ ranging from $10^{-3}$ to $10$, this approximation is accurate to 7 or 8 digits.

If greater accuracy or range is desired, then the parameters
$s=0.16$, $c=0.01$, and $M=115$ yield a relative error of about $10^{-13}$ for $r$ ranging from $10^{-5}$ to $100$.  Indeed, we have used the latter high-accuracy choice of parameters for the calculations in this paper.

\section*{Appendix C: Additional measures of optimality for potential diagonal bases}
Properties not guaranteed by COMX are locality and a self-integral that is not nearly zero
$w_i$ (see Eq. (\ref{eq:mom})); in fact, COMX allows for $w_i=0$,
in which case the function does not act like a $\delta$-function at all.

One scale-invariant measure tied to the size of $w_i$ is the positivity 
\begin{equation}
P = \frac{\int_x S(x)}{\int_x |S(x)|}  ,
\label{eq:positivity}
\end{equation}
which one would like to maximize.
Completeness and orthogonality limit the positivity to well below 1; gausslets with 8th and 10th order completeness have $P=0.693$ and $P=0.675$, respectively, which seem to be nearly optimal values for these level of completeness. In contrast, the
sinc function has $P=0$ due to the fact that it is not absolutely integrable. Meanwhile, the Meyer scaling function~\cite{DaubechiesBook} defines another orthogonal diagonal basis. It is defined in terms of its Fourier transform $\hat{S}$ as 
\begin{equation}
    \hat{S}(\omega)=\begin{cases}
    \frac{1}{\sqrt{2\pi}}, & \vert\omega\vert\leq2\pi/3\\
    0, & \vert\omega\vert\geq4\pi/3\\
    \frac{1}{\sqrt{2\pi}}\cos\left(\frac{\pi}{2}\nu\left(\frac{3\vert\omega\vert}{2\pi}-1\right)\right), & \text{otherwise},
    \end{cases}
\end{equation}
where $\nu$ continuosly interpolates $\nu(0)=0$ to $\nu(1)=1$. The most common choice for $\nu$ producing well-localized $S$ is a seventh-order polynomial which achieves the derivative conditions $\nu^{(k)}(0) = \nu^{(k)}(1) = 0$ for $k=1,2,3$. This choice yields a value of $P=0.585$.

Another scale-invariant quantification of the quality of a basis function is based on the usual $\Delta p \, \Delta x$ measure of uncertainty in a wavepacket. The {\it uncertainty} is
\begin{equation}
U = 4\left[\int_x S'(x)^2\right] \times \left[\int_x S(x)^2 (x-\bar x)^2\right]
\end{equation}
where $\bar x=\int_x S(x)^2$. The factor of 4 ensures that a Gaussian, the minimum-uncertainty wavepacket, has $U=1$. In general, smaller $U$ is better. Gausslets of order 8 and 10 have $U=2.11$ and $2.30$, respectively. Meanwhile, sincs have infinite uncertainty, and Meyer scaling functions have $U=4.09$. 


\end{document}